\documentclass[%
 reprint,
 onecolumn,
 amsmath,
 amssymb,
 aps,
 unsortedaddress
]{revtex4-2}

\usepackage[margin=1in]{geometry}
\usepackage{amsmath, amssymb, amsthm, mathtools}
\usepackage{hyperref}
\usepackage{enumitem}
\usepackage[svgnames]{xcolor}

\hypersetup{
    colorlinks=true,
    linkcolor=blue,
    citecolor=blue,
    urlcolor=blue
}

\newtheorem{theorem}{Theorem}[section]
\newtheorem{lemma}[theorem]{Lemma}
\newtheorem{proposition}[theorem]{Proposition}
\newtheorem{corollary}[theorem]{Corollary}
\theoremstyle{definition}
\newtheorem{definition}[theorem]{Definition}
\newtheorem{example}[theorem]{Example}
\theoremstyle{remark}
\newtheorem{remark}[theorem]{Remark}

\newcommand{\C}{\mathbb{C}}
\newcommand{\Z}{\mathbb{Z}}
\newcommand{\T}{\mathbb{T}}
\newcommand{\N}{\mathbb{N}}

\newcommand{\R}{\mathbb{R}}
\newcommand{\ii}{\mathrm{i}}
\newcommand{\dd}{\mathrm{d}}

\newcommand{\dist}{\operatorname{dist}}

\newcommand{\ket}[1]{\left|#1\right\rangle}
\newcommand{\bra}[1]{\left\langle#1\right|}

\newcommand{\norm}[1]{\left\lVert #1 \right\rVert}
\newcommand{\abs}[1]{\left|#1\right|}

\begin{document}

\title{Moment-Structured Block Encodings of Periodic Finite-Difference Operators}
\author{Jishnu Mahmud}
\email{jmahmud@vols.utk.edu}
%\affiliation{Department of Industrial and Systems Engineering, University of Tennessee Knoxville, USA}
\author{Rebekah Herrman%\orcidlink{0000-0001-6944-4206}}\thanks{corresponding author
}
\email{rherrma2@utk.edu}
\affiliation{Department of Industrial and Systems Engineering, University of Tennessee Knoxville, USA}
\date{\today}% It is always \today, today,
             %  but any date may be explicitly specified

\begin{abstract}

Block encoding is the standard technique for accessing matrix data in quantum linear-algebra algorithms. Its implementation directly affects its subnormalization, which in turn controls the algorithm's success probability, simulation time, and downstream costs. Explicit construction of block encodings with provably optimal subnormalization exists for only a handful of operators, with bespoke calculations used in its design. In this work, we develop a framework to block encode translation-invariant finite-difference operators on a periodic grid. These operators are the finite-difference discretizations of the constant-coefficient partial differential equations that sit at the core of scientific computing. We show that the moment order of these stencils can be used to simultaneously determine the continuum operator approximated, the vanishing order of the Fourier symbol, and the cost of the block encoding. From there, we derive a closed-form optimality criterion as a function of the stencil coefficients, which certifies whether the construction attains the optimal subnormalization for an entire operator family, uniformly in grid size, and quantifies the gap when it does not. The framework subsumes optimal constructions for the Laplacian operator in the literature and can be used to certify new instances at higher even orders, including the biharmonic operator. Furthermore, we derive success-probability floors parameterized by spectral properties of the operator's symbol and find explicit constants for the block encoding of the advection-diffusion family for which no prior explicit spatial block encoding exists.

\end{abstract}
\maketitle

\section{Introduction}
\label{sec:intro}
Partial differential equations (PDEs) are foundational models in scientific computing, appearing across fluid dynamics, materials science, transport, wave propagation, and many-body simulation \cite{antontsev2002energy, bates2006some, ikawa2000hyperbolic, von1999multibody}. The prospect of solving them using quantum computing has spurred extensive research. For example, quantum linear systems and differential-equation algorithms based on the quantum singular value transformation (QSVT) \cite{gilyen2019quantum, martyn2021grand} have made the discretized differential operator central to their work. For any of these algorithms to function, the first step is to implement the non-unitary action as a unitary operator that can then act on the data. This implementation is carried out using a method known as block-encoding, in which the target operator $A$, in its matrix form, normalized by a subnormalization factor $\lambda$, is embedded in the top-left corner of a larger unitary matrix \cite{low2019hamiltonian, gilyen2019quantum}. This subnormalization constant is of immense importance, as it not only directly scales the success probability when the operator action is recovered after readout but also scales Hamiltonian simulation times and multiplies the condition number seen by a downstream primitive such as QSVT.

There are many block encoding methods, each with its own stark tradeoff. Earlier works \cite{low2019hamiltonian, gilyen2019quantum, berry2007efficient, berry2015hamiltonian, childs2017quantum} design encoders that are operator-agnostic and rely on Quantum Random Access Memory or other black-box oracle access, which are often more difficult to implement than the original problem. These methods can encode any arbitrary matrix but can require a prohibitive number of quantum gates or circuit depth. Furthermore, they yield a subnormalization that scales with the matrix dimension rather than its spectral norm. It is apparent in recent years that such generalized block encoding exhibits extremely unfavorable cost scaling, making it infeasible unless full fault tolerance is achieved; therefore, efforts must resort to the exploitable structure of the specific operator \cite{kuklinski2025efficient}.

Recent work has therefore focused on developing structure-exploiting block-encoding methods with explicit representations. This research landscape appears to be split into two tracks. The first constructs explicit circuits for classes of operators that share structural properties worth exploiting: certain sparse matrices \cite{camps2024explicit}, Toeplitz and banded matrices \cite{sunderhauf2024block}, sparse matrices with periodic diagonal structure \cite{zecchi2026block}, and, at the framework level, pseudo-differential operators organized by the separability of their symbols between position and frequency, as well as by dimension-wise separability \cite{li2023efficient}. These works have the definite objective of generalizing over a class, but they do not provide a class-wide certificate that the subnormalization attained by the block-encoding construction is optimal. Setty in \cite{setty2025block} identified matching the subnormalization to the operator's spectral norm as an open problem for general sparse-matrix constructions. 

The second line of research studies optimal block encodings for more specialized operators. The discrete Laplacian under periodic boundary conditions achieves optimal subnormalization in \cite{sturm2025efficient}, under the conditions established there. Broader explicit constructions for different boundary conditions and elliptic operators are developed in \cite{kharazi2025explicit}, although without a corresponding optimality result for the entire class. In this line of work, the optimality claim is established by explicitly computing or otherwise characterizing the relevant eigenvalues, so the optimality argument does not automatically transfer to a new operator without a fresh spectral analysis. This leads us to observe that class-level constructions generally lack class-wide optimality guarantees, while existing optimality proofs have more limited class-level reach.

In this paper, we work at the intersection of the two tracks. We consider the class of translation-invariant finite-difference operators on periodic grids. This wide and highly applicable class of operators does not seem to have an exploitable property immediately apparent in prior work. For example, the symbol of a translation-invariant operator depends only on frequency, so the position-frequency separability used in \cite{li2023efficient} does not distinguish operators within this subclass. We observe, however, that these operators are linear combinations of lattice shifts and therefore admit an explicit block encoding whose subnormalization equals the sum of the coefficient weights. They are also diagonal in the Fourier basis. Together, these observations allow us to derive an analytic criterion based on the operator's spectral norm and thereby certify the minimum possible subnormalization across the entire class. Although the general case remains open, our work resolves it analytically for translation-invariant shift-stencil operators under a verifiable phase-alignment condition. Research on block encoding of specific operators \cite{sturm2025efficient, kharazi2025explicit} seeks optimality by explicitly computing the operator's eigenvalues. Our criterion, in contrast, applies to an entire class, operates directly on the stencil coefficients, distinguishes the regimes where the construction is and is not optimal, and supplies the operator-norm floor that per-operator arguments leave implicit.

While the optimality criterion determines whether the shift-LCU construction incurs any avoidable subnormalization overhead, we introduce the moment structure to address a complementary question: how the encoded operator acts on spectrally restricted inputs and how its implementation cost changes near the zeros of the symbol.

Let $m$ denote the moment order of the stencil, namely, the lowest order at which its discrete moments do not vanish. We show that $m$ simultaneously determines \begin{enumerate} 
\item[(i)] the continuum differential operator the stencil approximates, an order-$m$ derivative; 
\item[(ii)] the vanishing structure of the stencil's Fourier symbol, which vanishes to order $m$ at the origin; and 
\item[(iii)] the block encoding cost: both the optimality verdict of the criterion above and the probability of successfully implementing the target operator, which scales as $\delta^{2m}$ in the spectral margin $\delta$ separating the input from the symbol's zeros.
\end{enumerate}
Leveraging information about the Fourier symbol and moment of an operator is common in classical PDE research (i, ii). However, to our knowledge, no prior work leverages the operator's moment when building block-encoding circuits.

In summary, we make the following contributions in this work:
\begin{itemize}

\item \textbf{Closed-form optimality certification for an operator family}
We derive a criterion that certifies whether the shift-LCU normalization of the block-encoding implementation attains the operator-norm floor, implying subnormalization optimality, for an entire translation-invariant family. This criterion applies uniformly in the grid size, with no per-operator eigenvalue computation, and with a quantified gap whenever it is not attained. (Proposition \ref{prop:opnorm})

\item \textbf{Moment order as the unifying parameter:} We show that a single integer, the moment order $m$ of a given stencil, determines the continuum operator it approximates (Section \ref{sec:moments}), the Fourier Symbol's vanishing structure (Section \ref{sec:spectral_analysis}), and, finally, the block encoding cost entering the
optimality verdict and the success-probability scaling (Section \ref{sec:block-encoding}, \ref{sec:safe-band-psuc}).

\item \textbf{A class-level success guarantee:} We prove a success-probability floor that holds uniformly over all inputs that retain a safe-band mass in the Fourier Spectrum. (Theorem \ref{thm:safe-band}, Section \ref{subsec:class-guarantee}).

\item \textbf{Recovery and extension of the optimal Laplacian encoding:} We recover the optimal Laplacian encoding of \cite{sturm2025efficient} as the gap-closed regime from our optimality argument and observe that our framework can use the same certificate to yield new gap-closed instances at higher even order, including the biharmonic stencil and the fourth-order Laplacian refinement (Section \ref{subsec:app-symmetric}).

\item \textbf{Explicit encodings for asymmetric operators:} We use our framework to obtain explicit constants for the block encoding of the advection-diffusion family, a multi-zero operator class for which no prior explicit spatial construction exists (Section \ref{subsec:app-advection}).

\end{itemize}

Section~\ref{sec:background} defines shift-stencil operators and cements their basic structural properties, including the one-to-one correspondence between coefficient functions and operators. Section~\ref{sec:moments} lays the foundation for spatial moments by defining the discrete moment structure and uses a Taylor Series expansion of the operator to derive the continuum differential operator being approximated. Section~\ref{sec:spectral_analysis} develops the dual spectral aspect, shows that the stencils are diagonalized in the quantum Fourier transform (QFT) basis, and analyzes the moment expansion of the Fourier symbol. Section~\ref{sec:block-encoding} begins by showing how to construct explicit block-encoding circuits for arbitrary shift-stencil operators, derives a closed-form expression for the probability of success by leveraging the symbol representation in the Fourier basis, and closes with the operator-norm optimality criterion with the subnormalization gap. Section~\ref{sec:safe-band-psuc} turns the exact expression for the probability of success into input-uniform guarantees, and  Section~\ref{sec:applications} applies the framework to families of operators. Finally, Section~\ref{sec:discussion} concludes the paper with a discussion.

\section{Background}\label{sec:background}
In this section, we introduce the key mathematical concepts used throughout this work. We begin by defining the grid, functions, and operators.
Let $\T^d$ be a flat torus 
\[
    \T^d := \R^d/\Z^d,
\]
identified with $[0,1]^d$ and periodic boundary conditions. This implies that for $x \in \T$, the function $v:\T^d\to \C$ defined on the torus $\T$ has the following periodic property: 
\[
    v(x^{(1)}, ..., 0, ..., x^{(d)})= v(x^{(1)}, ..., 1, ..., x^{(d)}).
\]
For $n \in \N$, we define the following parameters
\[
    N=2^n,\qquad h=\frac{1}{N},\qquad G_N:=\Z_N^d = \{0, 1, ..., N-1\}^d,
\]
to discretize this periodic function $v$ by sampling it to produce $v_h$ such that
\[
    v_h(j):=v(hj),\qquad j\in G_N.
\]
Consequently, this can be represented on an $n-qubit$ amplitude-encoded quantum state
\[
    \ket{v_h}:=
    \frac{1}{\norm{v_h}_2}
    \sum_{j\in G_N} v_h(j)\ket{j},
\]
for $v_h\neq 0$. For a finite set of $d$-dimensional offsets, $T \subset \Z^d$, we define the shift-stencil operator
\begin{equation}
    A_h := A_{N,T,c}:=\sum_{t\in T} c_t S_t .
    \label{eq:stencil-operator}
\end{equation}
where the offset coefficients $c_t\in \C$, $t\in T$, and $S_t$ is the quantum periodic shift operator with a wrap around of $N$,
\[
    S_t\ket{j} = \ket{j+t\bmod N}, ~j\in G_N.
\]

A square matrix is said to be circulant if each row is identical to the previous row, but shifted one to the right (or left). The top row is identical to the bottom row shifted to the right (or left). This property allows us to state that the stencils are diagonalizable in the Fourier domain. 

\begin{proposition}[Structural properties of shift-stencil operators]
\label{prop:shift-lcu}

For this shift-stencil operator  $A_h$ on $\T^d_N$, we make the following observations, some of which have been proven in other papers.
\begin{enumerate}[label=(\roman*),leftmargin=*]

    \item $A_h$ is already in linear-combination-of-unitaries \cite{childs2012hamiltonian} form, with
    $\ell^1$ subnormalization $\lambda := \sum_{t \in T} |c_t|$ and ancilla cost
    $m_a := \lceil \log_2 |T| \rceil$. 
        
    \item \cite{torosov2009} Since $A_h$ is a circulant operator (linear operator acting on periodic functions), it is diagonalized by the Quantum Fourier Transform (QFT) basis state
    \[
        A_h = F^{-1}\, \widehat{H}\, F,
        \qquad
        \widehat{H}(k) = \sum_{t \in T} c_t\, e^{-2\pi \ii \, k \cdot t / N},
        \quad k \in G_N
    \]
    where the function $\widehat H$ is called the Fourier symbol of the stencil.

    \item  Identifying a stencil with its coefficient function
    $c:\mathbb{Z}^d_N\to\mathbb{R}$ with offsets reduced modulo $N$ and support $\operatorname{supp} c := \{t\in\mathbb{Z}^d_N : c_t\neq 0\}$. The assignment $c\mapsto A_h=\sum_t c_t S_t$ puts stencils in one-to-one correspondence with the shift-stencil operators of the form in Eq. \eqref{eq:stencil-operator}. Distinct coefficient functions yield distinct operators, since the QFT jointly diagonalizes the shifts with distinct symbols.
    
\end{enumerate}
\end{proposition}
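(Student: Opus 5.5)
The plan is to prove the three items separately, since each rests on a different elementary fact about the periodic shifts $S_t$.

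For (i), I would first note that each $S_t$ is a permutation of the computational basis $\{\ket{j}\}_{j\in G_N}$, so it is unitary. Writing $c_t=|c_t|e^{\ii\theta_t}$ gives $A_h=\sum_t |c_t|\,(e^{\ii\theta_t}S_t)$, which is a linear combination of unitaries with nonnegative weights. I would then invoke the standard PREPARE/SELECT construction of \cite{childs2012hamiltonian}. Index the offsets by an ancilla register of $m_a=\lceil\log_2|T|\rceil$ qubits, and let PREPARE map $\ket{0}\mapsto\lambda^{-1/2}\sum_t\sqrt{|c_t|}\ket{t}$, with zero amplitude on unused labels. Let SELECT $=\sum_t\ket{t}\bra{t}\otimes e^{\ii\theta_t}S_t$, padded with the identity on unused labels. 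A direct computation then gives
\[
(\bra{0}\otimes I)\,\mathrm{PREP}^\dagger\,\mathrm{SEL}\,\mathrm{PREP}\,(\ket{0}\otimes I)=\lambda^{-1}A_h .
\]
This is the $\ell^1$ subnormalization $\lambda=\sum_t|c_t|$.

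For (ii), I would check directly that the Fourier basis vectors are eigenvectors of every $S_t$. Let $\ket{\hat k}=N^{-d/2}\sum_j e^{2\pi\ii k\cdot j/N}\ket{j}$. Then
\[
S_t\ket{\hat k}=N^{-d/2}\sum_j e^{2\pi\ii k\cdot j/N}\ket{j+t}=e^{-2\pi\ii k\cdot t/N}\ket{\hat k},
\]
using the reindexing $j\mapsto j-t \bmod N$. This is well defined because $e^{2\pi\ii k\cdot(j+N e)/N}=e^{2\pi\ii k\cdot j/N}$ for every $e\in\Z^d$. By linearity, $A_h\ket{\hat k}=\widehat H(k)\ket{\hat k}$. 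Since $F$ (the $d$-fold tensor QFT) maps between this basis and the computational basis, we obtain $A_h=F^{-1}\widehat H F$. The one point needing care is the sign convention of $F$. If the paper's QFT uses the opposite sign, the factorization reads $F\widehat H F^{-1}$ instead, so I would fix the convention so that the displayed formula holds.

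For (iii), I would first observe that $S_{t+Ne}=S_t$, so the map $c\mapsto A_h$ is well defined on functions on $\Z_N^d$. The map is linear, so it suffices to show that its kernel is trivial. I see two routes:
\begin{itemize}
\item \emph{Matrix entries.} The entries are $\bra{j+t}A_h\ket{j}=c_t$ for each $t\in\Z_N^d$, because the $S_t$ with distinct $t$ have disjoint supports as permutation matrices. The operator therefore determines $c$.
\item \emph{Fourier symbol.} By (ii), the symbol $\widehat H$ is the discrete Fourier transform of $c$ on $\Z_N^d$, and this transform is invertible.
\end{itemize}
I would present the matrix-entry argument and note the Fourier version as the one the statement alludes to.

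The only genuine obstacle I anticipate is a bookkeeping subtlety in (iii). The original offset set $T\subset\Z^d$ may contain distinct offsets that coincide modulo $N$, for example when $N$ is small relative to the stencil width. Their coefficients then merge into a single $c_t$, so injectivity holds only after reduction modulo $N$, which is exactly how the statement is phrased. This merging can also only decrease $\lambda$, so (i) is unaffected.
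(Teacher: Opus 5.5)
Your proposal is correct. For (i) and (ii) it is essentially the paper's own argument, which the paper gives later as Proposition~\ref{prop:be-shiftlcu} and Theorem~\ref{thm:diagonalization}: the same PREPARE--SELECT construction, and the same eigenvector computation for $\ket{\chi_\theta}$ with $\theta=2\pi k/N$.

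In (i), you put the phases $e^{\ii\theta_t}$ into SELECT and use one PREPARE on both sides. The paper instead keeps SELECT phase-free and uses a separate $\mathrm{PREP}_L$ carrying $e^{-\ii\phi_t}$. Since $\sum_t\ket{t}\!\bra{t}\otimes e^{\ii\theta_t}S_t$ factors as a diagonal ancilla phase times the phase-free SELECT, the two constructions differ only in where that phase is applied.

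Your sign-convention caveat in (ii) is settled in the paper by Definition~\ref{def:brillouin}. There $\ket{\chi_\theta}$ is the image of the computational basis under the inverse QFT, so $A_h=F^{-1}\widehat H F$ holds as displayed.

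The genuinely different step is (iii). The paper gives only the one-line spectral reason that the QFT jointly diagonalizes the shifts with distinct symbols. Your main argument instead reads $c_t=\bra{j+t}A_h\ket{j}$ directly off the matrix, because the permutation matrices $S_t$, $t\in\Z_N^d$, have disjoint supports. This route is more elementary, does not depend on (ii), and makes the reduction of offsets modulo $N$ explicit. The paper's route ties injectivity to the symbol used throughout the rest of the paper. To be a complete proof, however, it needs the fact you state in your Fourier variant: the DFT on $\Z_N^d$ is invertible, i.e.\ the characters $k\mapsto e^{-2\pi\ii k\cdot t/N}$ are orthogonal. Pairwise distinctness of the individual shift symbols alone would not suffice.
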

Properties (i) and (ii) give two dual representations of $A_h$: a spatial one in terms of coefficients $\{c_t\}$, and a spectral one in terms of the Fourier symbol $\widehat{H}$. The moment theory developed in this work is the bridge between them. Item (iii) is the fact that allows us to prove that we can use the SHIFT-LCU \eqref{eq:stencil-operator} form to block encode any translation-invariant, finite-supported discrete operator.

In the next section, we develop a moment-structured analysis that applies uniformly to all translation-invariant operators, independent of the application. The discrete moments provide the organizing structure for the entire construction, as they determine the operator family and enable explicit block-encoding circuits. The associated Fourier symbol is used to analyze postselection success probabilities and certify subnormalization optimality uniformly across those families.

\section{Moments and Spatial Analysis of the Stencil Operator}\label{sec:moments}
Moments are central to developing a unifying method for explicit circuit design across a wide range of translation-invariant operators. We start by defining discrete moments for a function on a $d$-dimensional torus $\T$.
\begin{definition}[Discrete moments and moment order]
For $\alpha := (\alpha_1, \alpha_2,...,\alpha_d)~\in \N^d$, define
\[
    |\alpha|:=\alpha_1+\cdots+\alpha_d,\qquad
    \alpha!:=\alpha_1!\cdots \alpha_d!,\qquad
    t^\alpha:=t_1^{\alpha_1}\cdots t_d^{\alpha_d},
\]
where $|\alpha|$ denotes the total degree (order) of $\alpha$. Then the $\alpha^{th}$ discrete moment for the stencil - $(T,~c)$ in Eq. \eqref{eq:stencil-operator}, is defined as
\begin{equation}
    \mu_\alpha := \sum_{t\in T} c_t t^\alpha .
    \label{eq:moment}
\end{equation}
Furthermore, the stencil is said to have moment order $m\ge0$ if every moment of order below $m$ vanishes, while at least one moment of order exactly $m$ does not:
\[
    \mu_\alpha=0, \qquad \forall~ |\alpha|<m
\]
and there exists at least one $\alpha$ with $|\alpha| = m$ such that $|\mu_\alpha|\neq0$.   
An equivalent form, which proves to be convenient when applying this definition, is that the moment order is the smallest degree $m$ at which the following homogeneous polynomial
\begin{equation}
    P_m(\xi) := \sum_{|\alpha|=m} \frac{\mu_\alpha}{\alpha!} \xi^\alpha,
    \label{eq:moment-order-poly}
\end{equation}
is not zero.
\label{def:moments}
\end{definition}
The following example solidifies the definition.
\begin{example}[$5$ point Laplacian]
We start from $d=2$, which mean that the $\alpha$ vector has the representation $\alpha=(\alpha_1,\alpha_2)$ with order $|\alpha|=\alpha_1+\alpha_2$. The finite difference discretization of the five-point Laplacian has the coefficients $c_{(0,0)}=-4$ and $c_{(\pm1,0)}=c_{(0,\pm1)}=1$. Every moment, $\mu_\alpha$, of order $\le 1$ vanishes,
\[
    \mu_{(0,0)}=-4+4=0,\qquad
    \mu_{(1,0)}=\mu_{(0,1)}=0,
\]
while at order $2$ three multi-indices have the form:
\[
    \mu_{(2,0)}=2,\qquad \mu_{(1,1)}=0,\qquad \mu_{(0,2)}=2 .
\]
Since all order-$\le1$ moments vanish, but some order-$2$ moment does not, the stencil has moment order $m=2$. The homogeneous polynomial from the alternative definition is therefore 
\[
    P_2(\xi)=\frac{\mu_{(2,0)}}{2!\,0!}\xi_1^2
            +\frac{\mu_{(1,1)}}{1!\,1!}\xi_1\xi_2
            +\frac{\mu_{(0,2)}}{0!\,2!}\xi_2^2
            =\xi_1^2+\xi_2^2=|\xi|^2 .
\]

\end{example}
Next, we define the norms used throughout this work.
\begin{definition}\label{def:2norm}
    The weighted Euclidean norm of $x$, written as $||x||_2$, is 
    \begin{equation*}
        ||x||_2 = \sqrt{ \sum_i |x_i|^2}.
    \end{equation*}
\end{definition}
We will also utilize the $L^2$ norm of a continuous function.
\begin{definition}\label{def:L2norm}
    The $L^2$ norm of a continuous function $f(x)$, written as $||x||_{L^2}$, is 
    \begin{equation*}
        ||f(x)||_{L^2} = \left( \int_{\mathbb{T}^d} |f(x)|^2 dx\right)^{\frac{1}{2}}.
    \end{equation*}
\end{definition}

Since we are sampling continuous functions on a discrete grid, it is important to understand the connection between the continuum and discrete operators. The following lemma does exactly that, showing that the scaled grid norm \(\ell_h^2\) is the natural Riemann-sum approximation of the continuum \(L^2\) norm. It further helps convert the uniform grid errors directly into norm errors.

\begin{lemma}[Scaled sampling norm and \(L^\infty\) to \(\ell_h^2\) control]
As previously, let \(G_N=\mathbb Z_N^d\) and \(h=1/N\). If we define the norm $||.||^2_{l^2_h}$ on a grid function \(x:G_N\to\C\) to be
\[
    \|x\|_{\ell_h^2}^2
    :=
    h^d\sum_{j\in G_N}|x(j)|^2,
\]
and the the norm $||.||_{l^\infty(G_N)}$ on a grid function \(x:G_N\to\C\) to be 
\[
    \|x\|_{l^\infty(G_N)}
    := \max_{j \in G_N} |x(j)|,
\] 
then the following holds: For every grid function \(x:G_N\to \C\),

    \[
        \|x\|_{\ell_h^2}
        \leq
        \|x\|_{\ell^\infty(G_N)}.
    \]
    Equivalently, if \(|x(j)|\leq \varepsilon_h\) for all \(j\in G_N\), then
    \[
        \|x\|_{\ell_h^2}\leq \varepsilon_h.
    \]

\label{lemma:discrete-continuous-norm}
\end{lemma}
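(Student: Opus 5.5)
The plan is a direct pointwise bound followed by counting grid points; no earlier result is needed beyond the definitions of $G_N$ and $h$. First I will bound every summand in the definition of $\|x\|_{\ell_h^2}^2$ by its maximum: for each $j\in G_N$ we have $|x(j)|\le \|x\|_{\ell^\infty(G_N)}$, so $|x(j)|^2\le \|x\|_{\ell^\infty(G_N)}^2$. Summing over $j\in G_N$ then gives
\[
    \|x\|_{\ell_h^2}^2 = h^d\sum_{j\in G_N}|x(j)|^2 \le h^d\,|G_N|\,\|x\|_{\ell^\infty(G_N)}^2 .
\]

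The key step is the cardinality identity. Since $G_N=\Z_N^d$, we have $|G_N|=N^d$, and because $h=1/N$ this gives $h^d|G_N| = N^{-d}N^d = 1$. Hence $\|x\|_{\ell_h^2}^2\le \|x\|_{\ell^\infty(G_N)}^2$. Both quantities are nonnegative, so taking square roots, which is monotone on $[0,\infty)$, yields the claimed inequality. The factor $h^d$ should be read as the volume of one grid cell, so $h^d|G_N|=1$ is exactly the statement that the cells tile the unit torus $\T^d$. This is also why $\ell_h^2$ is a Riemann-sum approximation of the $L^2$ norm.

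For the equivalent form, assume $|x(j)|\le\varepsilon_h$ for all $j$. Then $\|x\|_{\ell^\infty(G_N)}\le\varepsilon_h$ by the definition of the maximum, and chaining this with the inequality above gives $\|x\|_{\ell_h^2}\le\varepsilon_h$. For the converse direction of the equivalence, take $\varepsilon_h:=\|x\|_{\ell^\infty(G_N)}$, which recovers the first form.

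There is no genuine obstacle. The only point needing care is the normalization: the bound is dimension-free precisely because the weight $h^d$ matches $|G_N|^{-1}$. With an unscaled $\ell^2$ norm one would instead pick up a factor $N^{d/2}$.
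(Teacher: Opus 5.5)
Your proposal is correct and matches the paper's proof essentially step for step: you bound each summand by $\|x\|_{\ell^\infty(G_N)}^2$, use $h^d|G_N| = h^d N^d = 1$, and take square roots. Your explicit handling of the equivalent $\varepsilon_h$ form is a harmless addition that the paper leaves implicit.
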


To prove this, first note that
\[
    \|x\|_{\ell_h^2}^2
    =
    h^d\sum_{j\in G_N}|x(j)|^2
    \leq
    h^d\sum_{j\in G_N}\|x\|_{\ell^\infty(G_N)}^2.
\]

Since \(|G_N|=N^d\) and \(h=1/N\), we have
\[
    h^d|G_N|
    =
    h^dN^d
    =
    1.
\]
Therefore
\[
    \|x\|_{\ell_h^2}^2
    \leq
    \|x\|_{\ell^\infty(G_N)}^2,
\]
which gives
\[
    \|x\|_{\ell_h^2}
    \leq
    \|x\|_{\ell^\infty(G_N)}.
\]

Now, with the norm conventions established by Lemma \ref{lemma:discrete-continuous-norm}, we propose the following Theorem, which shows that the action of the shift-stencil operator on a sampled smooth function admits a Taylor expansion whose coefficients are precisely the discrete moments of the stencil.

\begin{theorem}[Taylor-moment expansion in physical space] Let $v$ be a periodic $r+1$ differentiable function on $\T^d$, ($v\in C_{per}^{r+1}(\T^d)$). Then, for $r>0$ and uniformly over $j\in G_N$, meaning that the remainder bound is independent of the grid point,  
we can expand the action of the operator $A_h$ (Eq.~\eqref{eq:stencil-operator}) on $v$ as
\begin{equation}
        (A_h v_h)(j)
    \;=\:
    \sum_{q=0}^{r} h^q
    \sum_{|\alpha|=q}
    \frac{(-1)^{|\alpha|}\mu_\alpha}{\alpha!}\,
    \partial^\alpha v(hj)
    \;+\;
    \mathcal{R}_{r+1,h}(j),
    \label{eq:taylor-moment-physical}
\end{equation}
where $\partial^\alpha v$ denotes the mixed partial derivative of total order $|\alpha|$, defined by
\[
    \partial^\alpha v
    \;:=\;
    \frac{\partial^{|\alpha|} v}
         {\partial x_1^{\alpha_1} \cdots \partial x_d^{\alpha_d}},
\]
and the remainder $\mathcal{R}_{r+1, h}(j)$ satisfies the scale-grid bound
\begin{equation}
    \norm{\mathcal{R}_{r+1,h}}_{\ell_{h}^2}
    \;\leq\;
    h^{r+1}\,
    \left(\sum_{t\in T}\abs{c_t}\,
    \frac{\norm{t}_1^{r+1}}{(r+1)!}\right)
    \max_{|\beta|=r+1}\norm{\partial^\beta v}_{L^\infty(\T^d)}.
    \label{eq:taylor-reminder}
\end{equation}

\label{thm:taylor-reminder-physical}    
\end{theorem}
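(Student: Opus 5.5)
The plan is to compute $(A_h v_h)(j)$ pointwise, Taylor-expand each shifted sample about the grid point $hj$, and regroup the terms by multi-index so that the discrete moments $\mu_\alpha$ of Definition~\ref{def:moments} appear. The remainder is first bounded uniformly in $j$ and then converted to the $\ell_h^2$ norm with Lemma~\ref{lemma:discrete-continuous-norm}.

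\textbf{Step 1: componentwise action.} Since $S_t\ket{j}=\ket{j+t \bmod N}$, the coefficient of $\ket{j}$ in $S_t v_h$ is $v_h(j-t \bmod N)$. Hence
\[
(A_h v_h)(j)=\sum_{t\in T} c_t\, v_h(j-t \bmod N).
\]
Because $v$ is $\Z^d$-periodic and $hN=1$, reducing modulo $N$ does not change the value, so $v_h(j-t \bmod N)=v(hj-ht)$. The minus sign in the shift argument is what will produce the factor $(-1)^{|\alpha|}$ in \eqref{eq:taylor-moment-physical}.

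\textbf{Step 2: Taylor expansion at $hj$.} For each $t$, expand $v(hj-ht)$ using the multivariate Taylor formula with Lagrange or integral remainder, which is valid since $v\in C^{r+1}_{per}$:
\[
v(hj-ht)=\sum_{|\alpha|\le r}\frac{(-h)^{|\alpha|}t^\alpha}{\alpha!}\,\partial^\alpha v(hj)+\sum_{|\beta|=r+1}\frac{(-h)^{r+1}t^\beta}{\beta!}\,\partial^\beta v(\eta_{j,t}),
\]
where $\eta_{j,t}$ lies on the segment from $hj$ to $hj-ht$. Multiply by $c_t$, sum over $t$, and exchange the two finite sums. Grouping by $q=|\alpha|$ turns the coefficient of $h^q\,\partial^\alpha v(hj)$ into $(-1)^{q}\mu_\alpha/\alpha!$, which is exactly the main term of \eqref{eq:taylor-moment-physical}. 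The leftover terms define $\mathcal R_{r+1,h}(j)$.

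\textbf{Step 3: pointwise remainder bound.} Bound each $|\partial^\beta v(\eta_{j,t})|$ by $M:=\max_{|\beta|=r+1}\norm{\partial^\beta v}_{L^\infty(\T^d)}$. Then use the multinomial identity
\[
\sum_{|\beta|=r+1}\frac{|t^\beta|}{\beta!}=\frac{\norm{t}_1^{r+1}}{(r+1)!},
\]
which is the multinomial theorem applied to $(|t_1|+\cdots+|t_d|)^{r+1}$. This gives
\[
|\mathcal R_{r+1,h}(j)|\le h^{r+1}\Big(\sum_t|c_t|\,\norm{t}_1^{r+1}/(r+1)!\Big)M.
\]
The bound holds for every $j$ and does not depend on $j$, which is the uniformity claimed in the statement. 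Lemma~\ref{lemma:discrete-continuous-norm} then passes the same bound to the $\ell_h^2$ norm, yielding \eqref{eq:taylor-reminder}.

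\textbf{Expected obstacles.} The steps are routine, and the main risks are bookkeeping. The first is getting the shift direction right, so that the signs $(-1)^{|\alpha|}$ come out correctly. The second is handling the multivariate remainder carefully. Using the integral form of the remainder along the segment avoids any ambiguity about the intermediate points $\eta_{j,t}$, and that segment may leave $[0,1]^d$. Periodicity of $v$ makes the $L^\infty(\T^d)$ bound apply along the whole segment anyway.
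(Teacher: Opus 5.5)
Your proposal is correct and follows the paper's proof essentially step for step. It uses the pointwise identity $(A_hv_h)(j)=\sum_t c_t\,v(hj-ht)$ from periodicity, a Taylor expansion about $hj$, regrouping into the moments $\mu_\alpha$, the multinomial identity for $\sum_{|\beta|=r+1}|t^\beta|/\beta!$, and Lemma~\ref{lemma:discrete-continuous-norm} to pass to $\ell_h^2$. One small correction: $v$ is complex-valued, so the single-intermediate-point Lagrange remainder displayed in your Step 2 does not hold in general. You should commit to the integral remainder $(r+1)\sum_{|\beta|=r+1}\frac{(-ht)^\beta}{\beta!}\int_0^1(1-s)^r\,\partial^\beta v(hj-hst)\,\dd s$, which is what the paper uses; it bounds each integral by $M/(r+1)$ and gives exactly the stated constant.
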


\begin{proof}

First, we start with pointwise evaluation of $A_hv_h$ using periodicity. Fixing $j\in G_N$, and using the definition of $A_h$ and $S_t$,
\[
(A_h v_h)(j) = \sum_{t\in T} c_t\,(S_t v_h)(j)
            = \sum_{t\in T} c_t\, v_h\bigl((j-t)\bmod N\bigr)
\]
Since $v_h(k)=v(hk)$ for $k\in G_N$, and since
$h\big((j-t)\bmod N\big)\equiv h(j-t)\pmod{1}$ while $v$ is $1$-periodic in each coordinate, we have 
\[
    v_h\!\big((j-t)\bmod N\big)
    \;=\;
    v\!\big(h(j-t)\bmod 1\big)
    \;=\;
    v(hj-ht),
\]
where $(hj-ht)$ is a point on $\T^d$. Therefore
\begin{equation}
    (A_h v_h)(j)
    \;=\;
    \sum_{t\in T} c_t\, v(hj-ht).
    \label{eq:pointwise}
\end{equation}
Next, expanding $v(hj-ht)$ using the multivariate Taylor series applied at point $hj$ in the direction $ht$ results in
\begin{equation}
    v(hj-ht)
    \;=\;
    \sum_{|\alpha|\leq r}
    \frac{(-ht)^\alpha}{\alpha!}\,
    \partial^\alpha v(hj)
    \;+\;
    R_{r+1}(hj,t,h),
    \label{eq:multivariate-taylor}
\end{equation}
where
\begin{equation}
    R_{r+1}(hj,t,h)
    \;=\;
    (r+1)\sum_{|\alpha|=r+1}
    \frac{(-ht)^\alpha}{\alpha!}
    \int_0^1 (1-s)^{r}\,
    \partial^\alpha v(hj-hst)\,\dd s.
    \label{eq:integral-remainder}
\end{equation}
Letting $q=|\alpha|$ and using $(-ht)^\alpha = h^{|\alpha|}(-t)^\alpha$ yields
\[
    v(hj-ht)
    \;=\;
    \sum_{q=0}^{r}
    h^q\sum_{|\alpha|=q}
    \frac{(-t)^\alpha}{\alpha!}\,
    \partial^\alpha v(hj)
    \;+\;
    R_{r+1}(hj,t,h).
\]
We apply the operator $A_h$ by substituting this expanded form in Eq.~\eqref{eq:pointwise} to get
\begin{align*}
    (A_h v_h)(j)
    &\;=\;
    \sum_{q=0}^{r} h^q
    \sum_{|\alpha|=q}
    \frac{(-1)^{|\alpha|} \partial^\alpha v(hj)}{\alpha!}
    \underbrace{\sum_{t\in T} c_t\, t^\alpha}_{=\;\mu_\alpha}
    \;+\;
    \underbrace{\sum_{t\in T} c_t\, R_{r+1}(hj,t,h)}_{=:\;\mathcal{R}_{r+1,h}(j)}
    \\
    &\;=\;
    \sum_{q=0}^{r} h^q
    \sum_{|\alpha|=q}
    \frac{(-1)^{|\alpha|} \mu_\alpha}{\alpha!}\,
    \partial^\alpha v(hj)
    \;+\;
    \mathcal{R}_{r+1,h}(j),
\end{align*}
where we invoke the Eq.~\eqref{eq:moment} to arrive exactly at Eq.~\eqref{eq:taylor-moment-physical}. This concludes the first part of the proof. 

\noindent Finally, we prove the remainder bound starting from the integral in Eq.~\eqref{eq:integral-remainder}. The integral has an upper bound

\begin{align*}
    \abs{\int_0^1 (1-s)^{r}\, \partial^\alpha v(hj-hst)\,\dd s} 
    &\;\leq\; 
    \int_0^1 (1-s)^{r}\, \abs{\partial^\alpha v(hj-hst)\,\dd s}
    \\
    &\;\leq\; \max_{|\beta|=r+1}\norm{\partial^\beta v}_{L^\infty(\T^d)} 
    \int_0^1 (1-s)^{r}\,\dd s
    \\
    &\;=\; \max_{|\beta|=r+1}\norm{\partial^\beta v}_{L^\infty(\T^d)} \; \frac{1}{r+1}
    \label{eq:bounded-integral}
\end{align*}
where the first inequality is a triangle inequality, the second inequality is from pulling out the uniform pointwise bound, and the final bound is derived by evaluating the simple integral. Substituting this into Eq.~\eqref{eq:integral-remainder} observing $(ht)^\alpha= h^{r+1}t^\alpha$, $|t^{-\alpha}|=|t^\alpha|$ and using the multinomial identity,
\[
\sum_{|\alpha|=r+1} \frac{\abs{t^\alpha}}{\alpha!} \;=\; \frac{\norm{t}_1^{r+1}}{(r+1)!},
\]
we arrive at 
\begin{equation}
    \abs{R_{r+1}(hj,t,h)}
    \;\leq\;
    h^{r+1}\,
    \frac{\norm{t}_1^{r+1}}{(r+1)!}\,
    \max_{|\beta|=r+1}\norm{\partial^\beta v}_{L^\infty(\T^d)}.
    \label{eq:per-t-bound}
\end{equation}
Applying the triangle inequality in
$\mathcal{R}_{r+1,h}(j) = \sum_t c_t\, R_{r+1}(hj,t,h)$ and using
Eq.~\eqref{eq:per-t-bound} yields, uniformly in $j\in G_N$,
\[
    \abs{\mathcal{R}_{r+1,h}(j)}
    \;\leq\;
    h^{r+1}\,
    \left(\sum_{t\in T}\abs{c_t}\,
    \frac{\norm{t}_1^{r+1}}{(r+1)!}\right) \;
    \max_{|\beta|=r+1}\norm{\partial^\beta v}_{L^\infty(\T^d)}.
\]
Taking the supremum over $j$ keeps the upper bound unchanged, and using Lemma~\ref{lemma:discrete-continuous-norm} produces Eq.~\eqref{eq:taylor-reminder}, which concludes the proof.
\end{proof}

\begin{corollary}[Recovery of a differential operator at moment order $m$]
\label{cor:recovery}
Suppose $A_h$ has moment order $m\geq 1$ and let $v\in C_{\rm per}^{m+1}(\T^d)$. We define
\begin{equation}
    D_m v(x)
    \;:=\;
    \sum_{|\alpha|=m}
    \frac{\mu_\alpha}{\alpha!}\,
    (-\partial)^\alpha v(x)
    \;=\;
    (-1)^m \sum_{|\alpha|=m} \frac{\mu_\alpha}{\alpha!}\, \partial^\alpha ,
    \qquad x\in\T^d,
    \label{eq:Dm}
\end{equation}
the moment-weighted linear combination of mixed partial derivatives of
total order $m$. Then, uniformly over $j\in G_N$,
\begin{equation}
    (A_h v_h)(j)
    \;=\;
    h^m\,(D_m v)(hj)
    \;+\;
    \mathcal{R}_{m+1,h}(j),
    \label{eq:recovery}
\end{equation}
with $\norm{\mathcal{R}_{m+1,h}}_{\ell_{h}^2} = O(h^{m+1})$
given explicitly by Eq.~\eqref{eq:taylor-reminder} at $r=m$.
\end{corollary}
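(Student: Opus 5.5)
The plan is to specialize Theorem~\ref{thm:taylor-reminder-physical} to $r=m$ and then use the moment-order hypothesis to remove every term of degree below $m$. Since $v\in C^{m+1}_{\rm per}(\T^d)$ and $m\ge 1$ (so $r=m>0$), the theorem applies with $r=m$. It gives, uniformly in $j\in G_N$,
\[
    (A_h v_h)(j)=\sum_{q=0}^{m} h^q\sum_{|\alpha|=q}\frac{(-1)^{|\alpha|}\mu_\alpha}{\alpha!}\,\partial^\alpha v(hj)+\mathcal{R}_{m+1,h}(j).
\]

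By Definition~\ref{def:moments}, moment order $m$ means $\mu_\alpha=0$ for every $|\alpha|<m$. All inner sums with $q<m$ therefore vanish identically, and only the $q=m$ block survives. That block is
\[
    h^m\sum_{|\alpha|=m}\frac{(-1)^m\mu_\alpha}{\alpha!}\partial^\alpha v(hj).
\]
Since $(-\partial)^\alpha=(-1)^{|\alpha|}\partial^\alpha=(-1)^m\partial^\alpha$ on $|\alpha|=m$, this equals $h^m (D_m v)(hj)$ with $D_m$ as in Eq.~\eqref{eq:Dm}. This gives Eq.~\eqref{eq:recovery}. As a side remark, the existence of some $\alpha$ with $|\alpha|=m$ and $\mu_\alpha\ne0$ is not needed for the identity. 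It only ensures that $D_m$ is a nonzero operator, i.e.\ that $P_m\not\equiv 0$ in Eq.~\eqref{eq:moment-order-poly}.

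For the remainder, I would read off Eq.~\eqref{eq:taylor-reminder} at $r=m$:
\[
    \|\mathcal{R}_{m+1,h}\|_{\ell_h^2}\le h^{m+1}\Big(\sum_{t\in T}|c_t|\tfrac{\|t\|_1^{m+1}}{(m+1)!}\Big)\max_{|\beta|=m+1}\|\partial^\beta v\|_{L^\infty}.
\]
The bracketed constant depends only on the fixed stencil $(T,c)$. The derivative maximum is finite because $v\in C^{m+1}$ on the compact torus. Hence the bound is $O(h^{m+1})$ with a constant independent of $N$ and of $j$.

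No step is a real obstacle; the argument is bookkeeping. The one place that needs care is the $h$-independence of the stencil data. The offsets $t$ must be taken as fixed integer vectors in $T\subset\Z^d$, not reduced mod $N$, so that $\|t\|_1$ and the moments $\mu_\alpha$ do not change with $N$. This is the same convention under which Eq.~\eqref{eq:pointwise} was derived in the proof of the theorem, so I would simply state it explicitly.
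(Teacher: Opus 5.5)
Your proposal is correct and follows the paper's proof essentially verbatim: you specialize Theorem~\ref{thm:taylor-reminder-physical} to $r=m$, discard the $q<m$ terms via the moment-order hypothesis, identify the $q=m$ block with $h^m(D_m v)(hj)$, and read the remainder bound off Eq.~\eqref{eq:taylor-reminder}. Your added remarks are sound clarifications that the paper leaves implicit: $r=m\ge 1$ meets the theorem's $r>0$ hypothesis, and the offsets must be kept as fixed vectors in $\Z^d$ rather than reduced mod $N$, so that the constant is independent of $N$.
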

\begin{proof}
We apply Theorem~\ref{thm:taylor-reminder-physical} with $r=m$. From the definition of moment order: $\mu_\alpha = 0$ for all $|\alpha|<m$, which implies that the terms $q = 0, 1, \ldots, m-1$ in Eq.~\eqref{eq:taylor-moment-physical} vanish. The leading surviving term at $q=m$ is $h^m (D_m v)(hj)$ by Eq.~\eqref{eq:Dm}, and the remainder bound is Eq.~\eqref{eq:taylor-reminder} recovered with the substitution $r=m$.
\end{proof}

Concretely, $h^{-m}A_h$ acts on smooth sampled functions as the differential operator $D_m$ up to a remainder of order $h$. We have shown that the moment data $(\mu_\alpha)_{\alpha=m}$ alone determines which continuum operator the stencil approximates.

\begin{corollary}[Sharper expansion when additional moments vanish]
Suppose the stencil has moment order $m$ and that for some $\rho\geq 1$,
\[
    \mu_\alpha=0
    \qquad
    \text{for all }m<|\alpha|<m+\rho.
\]
If $v\in C_{\rm per}^{m+\rho+1}$, then applying Theorem \ref{thm:taylor-reminder-physical} with $r=m$ and using the definition in Eq.~\eqref{eq:recovery},
\[
    A_hv_h
    \;=\;
    h^m(D_mv)_h + h^{m+\rho}(D_{m+\rho}v)_h + O_{\ell_h^2}(h^{m+\rho+1}).
\]
Here, we use the big-O notation to show the leading power of $h$ that bounds the remainder term.
\end{corollary}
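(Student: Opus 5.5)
The plan is to apply Theorem~\ref{thm:taylor-reminder-physical} with $r=m+\rho$ rather than $r=m$. This is the natural choice: the regularity hypothesis $v\in C^{m+\rho+1}_{\rm per}$ is exactly what that theorem requires at $r+1=m+\rho+1$. The statement's phrase ``with $r=m$'' should be read as this choice of truncation order. Under it, Eq.~\eqref{eq:taylor-moment-physical} gives, uniformly in $j\in G_N$,
\[
(A_hv_h)(j)=\sum_{q=0}^{m+\rho} h^q\sum_{|\alpha|=q}\frac{(-1)^{|\alpha|}\mu_\alpha}{\alpha!}\,\partial^\alpha v(hj)+\mathcal R_{m+\rho+1,h}(j).
\]

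Next I sort the terms in the $q$-sum into three groups.
\begin{enumerate}
\item For $q<m$, every term vanishes by the definition of moment order $m$.
\item For $m<q<m+\rho$, every term vanishes by the extra hypothesis $\mu_\alpha=0$.
\item For $q=m$ and $q=m+\rho$, the terms survive.
\end{enumerate}
By the definition in Eq.~\eqref{eq:Dm}, the $q=m$ term is exactly $h^m(D_mv)(hj)$. I extend that definition verbatim to order $m+\rho$ by setting
\[
D_{m+\rho}v:=\sum_{|\alpha|=m+\rho}\frac{\mu_\alpha}{\alpha!}(-\partial)^\alpha v .
\]
With this, the $q=m+\rho$ term is $h^{m+\rho}(D_{m+\rho}v)(hj)$. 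Writing $(\cdot)_h$ for sampling on the grid, the surviving terms are precisely the two displayed terms of the claim.

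The last step is to bound the remainder. Eq.~\eqref{eq:taylor-reminder} with $r=m+\rho$ gives
\[
\norm{\mathcal R_{m+\rho+1,h}}_{\ell_h^2}\le h^{m+\rho+1}\Big(\sum_{t}|c_t|\frac{\norm{t}_1^{m+\rho+1}}{(m+\rho+1)!}\Big)\max_{|\beta|=m+\rho+1}\norm{\partial^\beta v}_{L^\infty}.
\]
The bracketed stencil constant does not depend on $h$, and the derivative maximum is finite because $v\in C^{m+\rho+1}$ on the compact torus. The right-hand side is therefore $O(h^{m+\rho+1})$, which is the claimed $O_{\ell_h^2}$ remainder.

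There is no real obstacle here; the argument is bookkeeping. The one point needing care is the truncation order. If we truncated at $r=m$ literally, the $h^{m+\rho}$ term would be absorbed into an $O(h^{m+1})$ remainder and the sharper form would be lost. So the expansion must be taken at $r=m+\rho$, and $D_{m+\rho}$ must be defined consistently with Eq.~\eqref{eq:Dm}. One should also note that $D_{m+\rho}$ may itself be zero; in that case the statement still holds and simply gives an even smaller error.
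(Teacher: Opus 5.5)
Your proof is correct and follows the paper's route: apply Theorem~\ref{thm:taylor-reminder-physical} once, discard the $q<m$ and $m<q<m+\rho$ terms using the moment hypotheses, identify the two surviving orders with $D_m$ and the analogously defined $D_{m+\rho}$, and bound the remainder by Eq.~\eqref{eq:taylor-reminder}. You are also right that the truncation must be taken at $r=m+\rho$, which matches the $C^{m+\rho+1}_{\rm per}$ hypothesis; the paper's ``$r=m$'' is a slip, because that choice only yields an $O(h^{m+1})$ remainder that absorbs the $h^{m+\rho}$ term.
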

This is particularly important since, for symmetric stencils such as the centered second difference, the third moment vanishes, so the next error term is $O(h^4)$ rather than $O(h^3)$. We move on to the next section, which develops the Fourier symbol analysis of the periodic operator on the discretized grid.

\section{Spectral Analysis of the Stencil Operator}\label{sec:spectral_analysis}
We start this section by settling the convention for frequencies used throughout this work. The frequencies live on a torus $\mathbb{T}^d_{2\pi} := \mathbb{R}^d/(2\pi\mathbb{Z})^d$ and the symbol $\widehat H(\theta) = \sum_{t\in T} c_t e^{-i\theta\cdot t}$ is therefore $2\pi$-periodic in each coordinate and real-analytic on $\mathbb{R}^d$. We write frequency points through representatives in the closed box $[-\pi,\pi]^d$, with opposite faces corresponding to the same frequency (i.e., $\theta$ and $\theta + 2\pi k$, $k\in\mathbb{Z^d}$). This means that $\theta = -\pi$ and $\theta = \pi$ are the same point. Furthermore, we note that all distances are torus distances, 
\[ 
\dist_{\mathbb{T}}(\theta,\zeta):= \min_{k\in\mathbb{Z}^d} \norm{\theta-\zeta+2\pi k}_2.
\]
Then, from the topology of a torus, $\mathbb{T}^d_{2\pi}$ is compact, and since $\widehat H$ is continuous, the suprema of $\lvert\widehat H\rvert$ over the zone are the maxima.

We further clarify that since the discrete grid $\Theta_N$ consists of the $N^d$ distinct torus points $\theta_k \equiv 2\pi k/N \pmod{2\pi}$, $k\in\mathbb{Z}^d_N$, each frequency class, including the zone-edge class $\theta\equiv\pi$, is counted exactly once in every sum over $\Theta_N$.

With the frequency convention in place, we now characterize the spectral structure of the shift-stencil operator $(A_h)$, which will enable us to prove the central results of this paper. First, we show that the lattice shift operators are diagonalized by the discrete Fourier basis, so the eigenvalues of $A_h$ coincide with the values of the Fourier symbol $\widehat H$. This observation significantly simplifies our analysis in Section \ref{sec:block-encoding}, which now requires us to consider $(\widehat H)$ rather than the spectrum and operator norm of $A_h$. We then analyze the local behavior $\widehat H$ using stencil moments. Together, these results provide the spectral foundation for Section \ref{sec:block-encoding}, where we construct the shift-LCU block encoding, derive its post-selection success probability, and finally derive the subnormalization optimality conditions.
 
\begin{definition}[Discrete Brillouin grid and Fourier basis states]
\label{def:brillouin}
The \emph{discrete Brillouin grid} of $G_N$ is
\[
    \Theta_N
    \;:=\;
    \left\{\theta_k = \frac{2\pi}{N}\,k \;:\; k\in\Z_N^d\right\}
    \;\subset\; torus~points~as~above.
\]
For each $\theta\in\Theta_N$, the associated \emph{Fourier basis state} 
on $n$-qubits is
\[
    \ket{\chi_\theta}
    \;:=\;
    N^{-d/2}\sum_{j\in G_N} e^{\ii\, j\cdot \theta}\ket{j}.
\]
The family $\{\ket{\chi_\theta}\}_{\theta\in\Theta_N}$ is an orthonormal
basis of the $n$-qubit Hilbert space, obtained from the computational
basis by the (inverse) Quantum Fourier Transform.
\end{definition}

\begin{theorem}[Fourier diagonalization of shift-stencils]
\label{thm:diagonalization}
For every $t\in\Z^d$ and $\theta\in\Theta_N$,
\begin{equation}
    S_t\ket{\chi_\theta}
    \;=\;
    e^{-\ii\, t\cdot \theta}\,\ket{\chi_\theta}.
    \label{eq:shift-eigen}
\end{equation}
Consequently, $A_h$ is diagonal in the Fourier basis:
\begin{equation}
    A_h\ket{\chi_\theta}
    \;=\;
    \widehat H(\theta)\,\ket{\chi_\theta},
    \qquad
    \widehat H(\theta)
    \;=\;
    \sum_{t\in T} c_t\, e^{-\ii\, t\cdot \theta}.
    \label{eq:diag}
\end{equation}
\end{theorem}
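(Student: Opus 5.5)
The proof is a direct computation in the computational basis. I will first establish the eigenrelation \eqref{eq:shift-eigen} for a single shift and then extend it to $A_h$ by linearity.

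Fix $t\in\Z^d$ and $\theta=\theta_k=2\pi k/N\in\Theta_N$. Apply $S_t$ termwise to the definition of $\ket{\chi_\theta}$ in Definition~\ref{def:brillouin}:
\[
S_t\ket{\chi_\theta}=N^{-d/2}\sum_{j\in G_N} e^{\ii\, j\cdot\theta}\ket{j+t \bmod N}.
\]
Next reindex with $j' := (j+t)\bmod N$. Since translation by $t$ is a bijection of $G_N=\Z_N^d$, the sum runs over all $j'\in G_N$. We have $j = j'-t+Nm$ for some $m\in\Z^d$, which gives
\[
e^{\ii\, j\cdot\theta}=e^{\ii\, j'\cdot\theta}\,e^{-\ii\, t\cdot\theta}\,e^{\ii\, Nm\cdot\theta}.
\]

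The only delicate point is the wrap-around, and it is handled by the discreteness of $\Theta_N$. Because $\theta=2\pi k/N$, we get $e^{\ii\, Nm\cdot\theta}=e^{2\pi\ii\, m\cdot k}=1$. The factor $e^{\ii j\cdot\theta}$ is therefore well defined on $\Z_N^d$, and the modular reduction introduces no extra phase. This step genuinely needs $\theta\in\Theta_N$; for a general $\theta$ on the continuous torus the relation would fail. Pulling out $e^{-\ii\, t\cdot\theta}$ then leaves exactly $\ket{\chi_\theta}$, which proves \eqref{eq:shift-eigen}.

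Finally, use the linearity of $A_h=\sum_{t\in T}c_tS_t$ from \eqref{eq:stencil-operator}:
\[
A_h\ket{\chi_\theta}=\sum_{t\in T}c_t e^{-\ii\, t\cdot\theta}\ket{\chi_\theta}=\widehat H(\theta)\ket{\chi_\theta}.
\]
Definition~\ref{def:brillouin} states that $\{\ket{\chi_\theta}\}_{\theta\in\Theta_N}$ is an orthonormal basis. Hence $A_h$ is diagonal in this basis, with eigenvalues $\widehat H(\theta_k)$. This is consistent with Proposition~\ref{prop:shift-lcu}(ii) under the identification $\theta_k=2\pi k/N$.

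If orthonormality needs justification rather than being cited, it follows from the geometric-sum identity $\sum_{j\in\Z_N}e^{2\pi\ii\, j(k-k')/N}=N\delta_{kk'}$, applied coordinatewise.
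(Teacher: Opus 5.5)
Your proposal is correct and follows essentially the same route as the paper: apply $S_t$ termwise, reindex via $\ell=(j+t)\bmod N$, use that $e^{\ii j\cdot\theta}$ is $N$-periodic because $\theta\in\Theta_N$, pull out $e^{-\ii t\cdot\theta}$, and then extend to $A_h$ by linearity. You make the wrap-around step explicit with $j=j'-t+Nm$ and add a remark justifying orthonormality; the paper leaves both implicit.
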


\begin{proof}
By the definition of $\ket{\chi_\theta}$ and the action of $S_t$,
\[
    S_t\ket{\chi_\theta}
    \;=\;
    N^{-d/2}\sum_{j\in G_N} e^{\ii\, j\cdot\theta}\,\ket{(j+t)\bmod N}.
\]
Reindexing the sum with $\ell := (j+t)\bmod N \in G_N$, and using that
$e^{\ii\,(\,\cdot\,)\cdot\theta}$ is $N$-periodic in each coordinate
since $\theta\in\Theta_N$,
\[
    S_t\ket{\chi_\theta}
    \;=\;
    N^{-d/2}\sum_{\ell\in G_N} e^{\ii\,(\ell - t)\cdot\theta}\ket{\ell}
    \;=\;
    e^{-\ii\, t\cdot\theta}\,
    \underbrace{N^{-d/2}\sum_{\ell\in G_N} e^{\ii\,\ell\cdot\theta}\ket{\ell}}_{=\;\ket{\chi_\theta}},
\]
which proves Eq.~\eqref{eq:shift-eigen}. Linearity of $A_h = \sum_t c_t S_t$
then gives Eq.~\eqref{eq:diag}.
\end{proof}

\subsection{Fourier symbol expansion}
We analyze the low-frequency behavior of $A_h$ by considering its Fourier symbol $\widehat H$ to be a real-analytic function on $\R^d$ (equivalently, on the torus, representatives in $[-\pi,\pi]^d$ due to the periodicity of the shift). We denote $\theta\in\R^d$ to be a continuous variable replacing $2\pi k/N$, which results in the following representation: 
\begin{equation}
    \widehat H(\theta)
    \;:=\;
    \sum_{t\in T} c_t\, e^{-\ii\, \theta\cdot t}.
    \label{eq:symbol}
\end{equation}
We now expand $\widehat H$ in a Taylor series around $\theta=0$ and identify the coefficients with the moments of the stencil.

\begin{theorem}[Moment expansion of the Fourier symbol]
\label{thm:symbol-expansion}
Let $r\geq 0$. Then for every $\theta\in\R^d$, 
\begin{equation}
    \widehat H(\theta)
    \;=\;
    \sum_{q=0}^{r}
    \sum_{|\alpha|=q}
    \frac{(-\ii)^{|\alpha|}\,\mu_\alpha}{\alpha!}\,\theta^\alpha
    \;+\;
    R_{r+1}(\theta),
    \label{eq:symbol-expansion}
\end{equation}
with remainder bound
\begin{equation}
    \abs{R_{r+1}(\theta)}
    \;\leq\;
    \frac{\norm{\theta}_2^{r+1}}{(r+1)!}\,
    \sum_{t\in T}\abs{c_t}\,\norm{t}_2^{r+1}.
    \label{eq:symbol-remainder}
\end{equation}
In particular, if $(T,c_t)$ has moment order $m$, then
\begin{equation}
    \widehat H(\theta)
    \;=\;
    (-\ii)^m \sum_{|\alpha|=m}
    \frac{\mu_\alpha}{\alpha!}\,\theta^\alpha
    \;+\;
    O\!\big(\norm{\theta}_2^{m+1}\big)
    \qquad\text{as }\theta\to 0.
    \label{eq:symbol-leading}
\end{equation}
\end{theorem}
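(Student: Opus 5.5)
The plan is to expand each exponential $e^{-\ii\theta\cdot t}$ separately by Taylor's theorem, sum against the coefficients $c_t$, and recognise the moments. This mirrors the proof of Theorem~\ref{thm:taylor-reminder-physical}, but it is simpler because the function is entire and we need only a pointwise bound.

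\textbf{One-variable reduction.} Fix $t\in T$ and set $g_t(s):=e^{-\ii s\,\theta\cdot t}$ for $s\in[0,1]$. Taylor's theorem with integral remainder at $s=0$, evaluated at $s=1$, gives
\[
e^{-\ii\theta\cdot t}=\sum_{q=0}^{r}\frac{(-\ii\,\theta\cdot t)^q}{q!}+\frac{1}{r!}\int_0^1(1-s)^r\,(-\ii\,\theta\cdot t)^{r+1}e^{-\ii s\theta\cdot t}\,\dd s .
\]

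\textbf{Polynomial terms.} The multinomial theorem gives $(\theta\cdot t)^q/q!=\sum_{|\alpha|=q}\theta^\alpha t^\alpha/\alpha!$. Multiplying by $c_t$, summing over $t\in T$, and using the definition of $\mu_\alpha$ in Eq.~\eqref{eq:moment} yields exactly the polynomial part of Eq.~\eqref{eq:symbol-expansion}. The remainder is then defined as $R_{r+1}(\theta):=\sum_t c_t\cdot(\text{integral term})$.

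\textbf{Remainder bound.} Since $|e^{-\ii s\theta\cdot t}|=1$ and $\int_0^1(1-s)^r\,\dd s=1/(r+1)$, each integral term has modulus at most $|\theta\cdot t|^{r+1}/(r+1)!$. Cauchy--Schwarz gives $|\theta\cdot t|\le\norm{\theta}_2\norm{t}_2$, and the triangle inequality over $t$ then gives Eq.~\eqref{eq:symbol-remainder}. Using Cauchy--Schwarz on the scalar $\theta\cdot t$ is precisely what produces $\ell^2$ norms here, in place of the $\ell^1$ norms that appear in the physical-space bound.

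\textbf{Leading term.} For Eq.~\eqref{eq:symbol-leading}, apply the expansion with $r=m$. Moment order $m$ kills every term with $q<m$. The $q=m$ term equals $(-\ii)^m\sum_{|\alpha|=m}\mu_\alpha\theta^\alpha/\alpha!$, and the remainder is bounded by $C\norm{\theta}_2^{m+1}$ with $C=\sum_t|c_t|\norm{t}_2^{m+1}/(m+1)!$, a constant independent of $\theta$.

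No step here is a genuine obstacle. The one point needing care is to pass the $(r+1)$-th order term through the scalar one-variable expansion, as above, rather than through a multi-index remainder. That avoids a multinomial bound in $\ell^1$ and gives the stated $\ell^2$ form directly.
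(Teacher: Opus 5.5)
Your proposal is correct and follows the paper's proof essentially step for step. Both arguments use a scalar Taylor expansion of $e^{-\ii x}$ at $x=\theta\cdot t$, the multinomial theorem to identify the moments, the triangle inequality plus Cauchy--Schwarz for the remainder, and then specialization to $r=m$; the only difference is that you justify the scalar bound $|\theta\cdot t|^{r+1}/(r+1)!$ through the integral form of the remainder, which is the rigorous route for a complex-valued function, whereas the paper simply cites the unit modulus of the derivatives of $e^{-\ii x}$.
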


\begin{proof}
For every $x\in\R$ and integer $r\geq 0$, the Taylor remainder for
$e^{-\ii x}$ satisfies
\begin{equation}
    \left|\, e^{-\ii x}
    - \sum_{q=0}^{r}\frac{(-\ii x)^q}{q!} \,\right|
    \;\leq\;
    \frac{|x|^{r+1}}{(r+1)!},
    \label{eq:scalar-exp-remainder}
\end{equation}
since $|(d^k/dx^k)\,e^{-\ii x}| = 1$ for every $k$. Applying
Eq.~\eqref{eq:scalar-exp-remainder} with $x = \theta\cdot t$,
\[
    e^{-\ii\,\theta\cdot t}
    \;=\;
    \sum_{q=0}^{r}\frac{(-\ii\,\theta\cdot t)^q}{q!}
    \;+\;
    r_{r+1}(\theta,t),
    \qquad
    |r_{r+1}(\theta,t)|
    \leq
    \frac{|\theta\cdot t|^{r+1}}{(r+1)!}.
\]
By the multinomial theorem,
$(\theta\cdot t)^q = \sum_{|\alpha|=q}\frac{q!}{\alpha!}\,t^\alpha\theta^\alpha$,
so
\[
    \frac{(-\ii\,\theta\cdot t)^q}{q!}
    \;=\;
    (-\ii)^q \sum_{|\alpha|=q}\frac{t^\alpha\theta^\alpha}{\alpha!}.
\]
Multiplying by $c_t$, summing over $t\in T$, and using
$\mu_\alpha = \sum_t c_t t^\alpha$:
\[
    \widehat H(\theta)
    \;=\;
    \sum_{q=0}^{r}(-\ii)^q\sum_{|\alpha|=q}
    \frac{\mu_\alpha\,\theta^\alpha}{\alpha!}
    \;+\;
    R_{r+1}(\theta),
    \qquad
    R_{r+1}(\theta) := \sum_{t\in T} c_t\, r_{r+1}(\theta,t),
\]
which gives Eq.~\eqref{eq:symbol-expansion} since $(-\ii)^q = (-\ii)^{|\alpha|}$
when $|\alpha|=q$.

For the remainder bound, the triangle inequality and Cauchy--Schwarz
give
\[
    \abs{R_{r+1}(\theta)}
    \;\leq\;
    \sum_{t\in T}\abs{c_t}\,
    \frac{|\theta\cdot t|^{r+1}}{(r+1)!}
    \;\leq\;
    \frac{\norm{\theta}_2^{r+1}}{(r+1)!}
    \sum_{t\in T}\abs{c_t}\,\norm{t}_2^{r+1},
\]
which is Eq.~\eqref{eq:symbol-remainder}. Finally, Eq.~\eqref{eq:symbol-leading}
follows by specializing to $r=m$, dropping all terms with $|\alpha|<m$
by the moment-order hypothesis, and absorbing the explicit bound into
$O(\|\theta\|_2^{m+1})$.
\end{proof}

The choice of expanding around $\theta=0$ is not arbitrary, but rather because it is significant for a wide range of target operators and their block encodings. First, for smooth periodic inputs, the symbol decays as $|\theta|^{-k}$, so its spectral mass is concentrated near the origin. Second, one of the primary interests in this paper is stencils with moment order $m \geq 1$, which approximate differential operators (Corollary~\ref{cor:recovery}); for these, $\widehat H(0) = \mu_{\mathbf 0} = 0$, so $\theta = 0$ lies in the zero set of the symbol. Third, the local behavior of $\widehat{H}$ near its zero set controls the post-selection success probability of the shift-LCU block encoding. At frequencies where $\widehat{H}$ is small, $A_h$ attenuates input amplitude, and the safe-band analysis of Section~\ref{sec:safe-band-psuc} quantifies this dependence explicitly.

\begin{remark}[Consistency with the recovered operator]
The leading term in Eq.~\eqref{eq:symbol-leading} is precisely $(-\ii)^m P_m(\theta)$, where $P_m$ is the moment polynomial from Eq.~\eqref{eq:moment-order-poly}. With the convention from Eq.~\eqref{eq:symbol}, the symbol of the differential operator $D_m$ from Corollary~\ref{cor:recovery} acts on the plane wave $e^{ \ii\theta\cdot x}$ as multiplication by $(-\ii)^m \sum_{|\alpha|=m}\frac{\mu_\alpha}{\alpha!}\theta^\alpha$. This is the spectral counterpart of Corollary~\ref{cor:recovery}: the same moment data $\{\mu_\alpha\}_{|\alpha|=m}$ determine both the physical-space leading behavior $h^m(D_m v)_h$ and the frequency-space leading behavior of the symbol $\widehat H(\theta)$ near $\theta=0$. 
\end{remark}

\section{Block encoding the Shift-LCU}
\label{sec:block-encoding}
In the preceding sections, we developed the spatial and spectral analysis of the operator $A_h$, showing that moments are central to isolating the continuum approximation in the spatial domain and the local structure of the Fourier symbol. Now, we focus on implementing the operator and observe that the same moments govern the cost of block encoding it on a quantum computer.

\subsection{Block encoding primitives for the shift stencil}
As per Proposition~\ref{prop:shift-lcu}(i), the structure of $A_h$ is already in the Linear Combination of Unitaries form. We adopt the standard notion of block-encoding proposed in \cite{gilyen2019quantum}.

\begin{definition}[Block encoding]
\label{def:block-encoding}
Let $A_h$ act on $n$ qubits. A unitary $U$ on $m_a + n$ qubits is an $(\lambda, m_a, \varepsilon)$ block encoding of $A$ if
\[
    \norm{\,A_h
    \;-\;
    \lambda\,
    \big(\bra{0}^{\otimes m_a}\otimes I_n\big)\,
    U\,
    \big(\ket{0}^{\otimes m_a}\otimes I_n\big)}_{\mathrm{op}}
    \;\leq\;
    \varepsilon.
\]
Here, the parameter $\lambda$ is the subnormalization, $m_a$ the number of ancilla qubits, $\varepsilon$ the encoding error and $\norm{.}_{\mathrm{op}}$ is the operator norm.
\end{definition}
The following proposition uses this definition to implement the encoding for our operator $A_h$. We use the fact that the operator is already expressed as a linear combination of shifts, $\{S_t\}_{t\in T}$, with $\ell^1$ weights $\lambda = \sum_{t \in T} |c_t|$.

\begin{proposition}[Block encoding implementation for shift-LCU]
We use the Prepare-Select method for the implementation. Let us represent $c_t$ in Euler's form $c_t = \abs{c_t}\,e^{\ii\phi_t}$ for $t\in T$, $\lambda = \sum_{t\in T}\abs{c_t}$ and $m_a = \lceil\log_2\abs{T}\rceil$. The right and left Prepare unitaries are defined by their action on the ancillary qubits:
\[
    \mathrm{PREP}_R\ket{0}^{\otimes m_a}
    \;=\;
    \frac{1}{\sqrt\lambda}\sum_{t\in T}\sqrt{\abs{c_t}}\,\ket{t},
    \qquad
    \mathrm{PREP}_L\ket{0}^{\otimes m_a}
    \;=\;
    \frac{1}{\sqrt\lambda}\sum_{t\in T}\sqrt{\abs{c_t}}\,e^{-\ii\phi_t}\ket{t},
\]
while the Select unitaries are defined as operators with ancillary control acting on the system qubits,
\[\mathrm{SEL} = \sum_{t\in T}\ket{t}\!\bra{t}\otimes S_t.\]
Now, a  $(\lambda, m_a, 0)$ block encoding of $A_h$ can be implemented by
\[
    U \;:=\;
    \big(\mathrm{PREP}_L^\dagger\otimes I_n\big)\,
    \mathrm{SEL}\,
    \big(\mathrm{PREP}_R\otimes I_n\big)
\]
\begin{proof}
Acting on $\ket{0}^{\otimes m_a}\ket{j}$ and applying the three factors
in turn,
\[
    \big(\mathrm{PREP}_R\otimes I_n\big)\ket{0}^{\otimes m_a}\ket{j}
    =
    \frac{1}{\sqrt\lambda}\sum_{t\in T}\sqrt{\abs{c_t}}\,\ket{t}\ket{j}
    \;\xmapsto{\;\mathrm{SEL}\;}\;
    \frac{1}{\sqrt\lambda}\sum_{t\in T}\sqrt{\abs{c_t}}\,\ket{t}\,S_t\ket{j}.
\]
Projecting the ancilla onto $\ket{0}^{\otimes m_a}$ after
$\mathrm{PREP}_L^\dagger$ and using
$\bra{0}^{\otimes m_a}\mathrm{PREP}_L^\dagger\ket{t}
= \lambda^{-1/2}\sqrt{\abs{c_t}}\,e^{\ii\phi_t}$,
\[
    \big(\bra{0}^{\otimes m_a}\otimes I_n\big)\,U\,
    \big(\ket{0}^{\otimes m_a}\otimes I_n\big)\ket{j}
    =
    \frac{1}{\lambda}\sum_{t\in T}\abs{c_t}\,e^{\ii\phi_t}\,S_t\ket{j}
    =
    \frac{1}{\lambda}\sum_{t\in T} c_t\,S_t\ket{j}
    =
    \frac{1}{\lambda}A_h\ket{j}.
\]
Since this holds for every basis state $\ket{j}$, the encoding is exact and thus
$\varepsilon = 0$.
\end{proof}
\label{prop:be-shiftlcu}
\end{proposition}
The two Prepare circuits differ in implementation by a factor of $e^{\ii\phi_t}$. For the operators of interest in this paper, the coefficients of the LCU are real, and therefore the parameter $\phi$ is restricted to the values $\{0, \pi\}$.  This implies that the Prepare circuits differ only by a $-$ sign, which can be implemented simply using a single-qubit $Z$ gate on the branch-indicator qubit register.

\subsection{Success Probability}
The probability of success is a central measure for block encoding methods. We derive a closed-form success probability by exploiting the fact that the operator is diagonal in the Fourier basis (Theorem \ref{thm:diagonalization}). 

\begin{theorem}[Closed-form success probability]
\label{thm:success-closedform}
Let $U$ be the block encoding as in Proposition \ref{prop:be-shiftlcu}, and let $\ket{v}$ be a normalized $n$-qubit state with Fourier expansion
\[
    \ket{v}
    =
    \sum_{\theta\in\Theta_N}\tilde v(\theta)\,\ket{\chi_\theta},
    \qquad
    \sum_{\theta\in\Theta_N}\abs{\tilde v(\theta)}^2 = 1.
\]
Applying $U$ to $\ket{0}^{\otimes m_a}\ket{v}$ and measuring the ancilla
register, the probability of the outcome $\ket{0}^{\otimes m_a}$ is
\begin{equation}
    p_{\mathrm{succ}}(\ket{v})
    \;=\;
    \frac{1}{\lambda^2}\,\norm{A_h\ket{v}}_2^2
    \;=\;
    \frac{1}{\lambda^2}
    \sum_{\theta\in\Theta_N}
    \abs{\widehat H(\theta)}^2\,\abs{\tilde v(\theta)}^2.
    \label{eq:succ-closedform}
\end{equation}
If post-selection results in success, the circuit simulates the action of operator $A_h$ on the system register, resulting in $A_h\ket{v}/\norm{A_h\ket{v}}_2$.
\end{theorem}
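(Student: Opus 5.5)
The plan is to decompose the post-unitary state into its ancilla-$\ket{0}^{\otimes m_a}$ component and an orthogonal remainder, and then read off the squared norm of the first component. Write $\Pi_0 := \ket{0}\!\bra{0}^{\otimes m_a}\otimes I_n$. By Proposition~\ref{prop:be-shiftlcu}, the encoding is exact ($\varepsilon=0$), so
\[
    \big(\bra{0}^{\otimes m_a}\otimes I_n\big)\,U\,\big(\ket{0}^{\otimes m_a}\otimes I_n\big) = \frac{1}{\lambda}A_h .
\]
Hence
\[
    U\ket{0}^{\otimes m_a}\ket{v} = \ket{0}^{\otimes m_a}\otimes \frac{1}{\lambda}A_h\ket{v} + \ket{\perp},
\]
where $\Pi_0\ket{\perp}=0$.

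By the Born rule, the probability of the outcome $\ket{0}^{\otimes m_a}$ is $\norm{\Pi_0 U\ket{0}^{\otimes m_a}\ket{v}}_2^2 = \lambda^{-2}\norm{A_h\ket{v}}_2^2$. This gives the first equality in Eq.~\eqref{eq:succ-closedform}. The post-measurement system state is the normalized projected vector $A_h\ket{v}/\norm{A_h\ket{v}}_2$, which gives the last claim. When $A_h\ket{v}=0$ the success probability is zero, so there is nothing to state about the post-selected state; I will note this degenerate case explicitly.

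For the second equality, expand $\ket{v}$ in the Fourier basis of Definition~\ref{def:brillouin}. Theorem~\ref{thm:diagonalization} gives $A_h\ket{v}=\sum_{\theta\in\Theta_N}\widehat H(\theta)\tilde v(\theta)\ket{\chi_\theta}$. Since $\{\ket{\chi_\theta}\}$ is orthonormal, Parseval yields $\norm{A_h\ket{v}}_2^2=\sum_\theta|\widehat H(\theta)|^2|\tilde v(\theta)|^2$.

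There is no real obstacle here. The only point needing care is that the exactness of the block encoding is what makes the projected vector equal $\lambda^{-1}A_h\ket{v}$ exactly, rather than up to an error $\varepsilon$. The rest is a direct application of the orthonormality of the Fourier basis.
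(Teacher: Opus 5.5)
Your proposal is correct and follows the paper's proof essentially step for step. Exactness of the encoding in Proposition~\ref{prop:be-shiftlcu} gives $U\ket{0}^{\otimes m_a}\ket{v}=\ket{0}^{\otimes m_a}\lambda^{-1}A_h\ket{v}+\ket{\perp}$, the Born rule then yields the first equality and the post-selected state, and Theorem~\ref{thm:diagonalization} with orthonormality of $\{\ket{\chi_\theta}\}$ yields the second; your explicit treatment of the degenerate case $A_h\ket{v}=0$, where the normalized output is undefined, is a small but worthwhile addition that the paper omits.
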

\begin{proof}

From the definition of Block encoding \ref{def:block-encoding} and Proposition \ref{prop:be-shiftlcu}, 
\[U\ket{0}^{\otimes m_a}\ket{v} = \ket{0}^{\otimes m_a}\,\lambda^{-1}A_h\ket{v} + \ket{\perp}\]
where $\ket{\perp}$ is the state orthogonal to the desired post-selection state $\ket{0}^{\otimes m_a}$. Using Borne's rule, the probability of selecting the  $\ket{0}^{\otimes m_a}$ state is  $\norm{\lambda^{-1}A_h\ket{v}}_2^2$ and the post-measurement system state is the normalization of $A_h\ket{v}$. Representing $A_h$ and $\ket{v}$ in the $\{\ket{\chi_\theta}\}$ basis and applying Theorem \ref{thm:diagonalization} to diagonalize $A_h\ket{\chi_\theta} = \widehat H(\theta)\ket{\chi_\theta}$:
\[
    A_h\ket{v}
    \;=\;
    \sum_{\theta\in\Theta_N}\tilde v(\theta)\,\widehat H(\theta)\,\ket{\chi_\theta}.
\]
Use the fact that $\{\ket{\chi_\theta}\}$-basis is orthonormal to get $\norm{A_h\ket{v}}_2^2 = \sum_{\theta}\abs{\widehat H(\theta)}^2\abs{\tilde v(\theta)}^2$, which gives Eq.~\eqref{eq:succ-closedform}. 
\end{proof}

\subsection{The operator norm as an optimality benchmark}

The cost of block encoding is controlled by the subnormalization $\lambda =  \sum_{t\in T}\abs{c_t}$, and the probability of success scaling is quantified in Theorem \ref{thm:success-closedform}. Since a lower $\lambda$ value directly improves success probability at the rate of $\lambda^{-2}$, it is important to quantify its lower bound, which would therefore be deemed as an optimal subnormalization. We notice that, given a $\lambda'$ there is no $(\lambda',m_a,0)$ block encoding of $A_h$ which can have $\lambda' < \norm{A_h}_{\mathrm{op}}$. Therefore, the operator norm on $A_h$ is the floor to the achievable subnormalization, and we say that the LCU design is optimal when our $\lambda$ attains it. The following proposition solidifies this statement.

\begin{proposition}[Operator norm and the subnormalization gap]
\label{prop:opnorm}
The shift-stencil operator satisfies
\begin{equation}
    \norm{A_h}_{\mathrm{op}}
    \;=\;
    \max_{\theta\in\Theta_N}\abs{\widehat H(\theta)}
    \;\leq\;
    \norm{\widehat H}_\infty
    \;\leq\;
    \lambda,
    \label{eq:opnorm-chain}
\end{equation}
where $\norm{\widehat H}_\infty := \max{\theta\in[-\pi,\pi]^d}\abs{\widehat H(\theta)}$.
The last inequality is an equality if the terms $\{c_t\,e^{ -\ii\,\theta^\star\cdot t}\}_{t\in T}$ share a common phase at some $\theta^\star$, e.g., $\norm{\widehat H}_\infty=\lambda$ whenever all $c_t\geq0$.
\end{proposition}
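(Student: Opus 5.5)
The plan is to prove the chain in Eq.~\eqref{eq:opnorm-chain} link by link, using the Fourier diagonalization of Theorem~\ref{thm:diagonalization}, and then treat the equality case separately.

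\textbf{First equality.} By Theorem~\ref{thm:diagonalization}, the orthonormal Fourier basis $\{\ket{\chi_\theta}\}_{\theta\in\Theta_N}$ of Definition~\ref{def:brillouin} diagonalizes $A_h$ with eigenvalues $\widehat H(\theta)$. Thus $A_h=F^{-1}\,\mathrm{diag}(\widehat H(\theta))\,F$ with $F$ unitary, so $A_h$ is normal. For any normalized $\ket{v}=\sum_\theta \tilde v(\theta)\ket{\chi_\theta}$, the same computation as in Theorem~\ref{thm:success-closedform} gives
\[
\norm{A_h\ket{v}}_2^2=\sum_{\theta\in\Theta_N}\abs{\widehat H(\theta)}^2\abs{\tilde v(\theta)}^2\le \max_{\theta\in\Theta_N}\abs{\widehat H(\theta)}^2 .
\]
Choosing $\ket{v}=\ket{\chi_{\theta^\ast}}$ at a maximizing grid point attains this bound. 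Hence $\norm{A_h}_{\mathrm{op}}=\max_{\Theta_N}\abs{\widehat H}$.

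\textbf{Middle inequality.} This is immediate, because $\Theta_N$ is a finite subset of the torus $\mathbb{T}^d_{2\pi}$, whose points have representatives in $[-\pi,\pi]^d$. The maximum over the continuum exists by compactness and continuity, as noted at the start of Section~\ref{sec:spectral_analysis}.

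\textbf{Last inequality.} For every $\theta$, the triangle inequality together with $\abs{e^{-\ii\theta\cdot t}}=1$ gives
\[
\abs{\widehat H(\theta)}\le\sum_{t\in T}\abs{c_t}=\lambda .
\]
Taking the maximum over $\theta$ yields $\norm{\widehat H}_\infty\le\lambda$.

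\textbf{Equality case.} Suppose there is $\theta^\star$ and a phase $\psi$ with $c_te^{-\ii\theta^\star\cdot t}=\abs{c_t}e^{\ii\psi}$ for all $t$. Then
\[
\widehat H(\theta^\star)=e^{\ii\psi}\sum_{t\in T}\abs{c_t}, \qquad \abs{\widehat H(\theta^\star)}=\lambda .
\]
Combined with the upper bound, this forces $\norm{\widehat H}_\infty=\lambda$. When all $c_t\ge0$, take $\theta^\star=0$: every term equals $c_t=\abs{c_t}$, so they share phase $0$. If desired, one can also note the converse, which is the equality condition of the complex triangle inequality: equality holds exactly when all nonzero terms are nonnegative multiples of a common unit complex number. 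That converse is what makes the phase-alignment criterion sharp.

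\textbf{Main obstacle.} None of the steps is technically hard; the one needing care is the first equality. It relies on normality of $A_h$, which holds because it is unitarily diagonalizable, so that the operator norm equals the spectral radius rather than merely being bounded below by it. Theorem~\ref{thm:diagonalization} supplies exactly this orthonormal eigenbasis.

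I would also remark that the statement shows $\norm{\widehat H}_\infty=\lambda$ under alignment, not that $\norm{A_h}_{\mathrm{op}}=\lambda$. The latter needs the aligning $\theta^\star$ to lie in $\Theta_N$. This holds for $\theta^\star=0$, or for $\theta^\star=\pi$-type points when $N$ is even, which is always the case since $N=2^n$. I would flag this as the condition under which the gap closes uniformly in grid size.
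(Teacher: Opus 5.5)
Your proof is correct and follows the paper's argument essentially step for step. You use the unitary Fourier diagonalization for the first equality (bounding $\norm{A_h\ket{v}}_2^2$ where the paper reads off singular values), pass from the grid maximum over $\Theta_N$ to the continuum maximum, and finish with the triangle inequality and its common-phase equality case. Your closing remark is also correct and goes slightly beyond the paper's proof: concluding $\norm{A_h}_{\mathrm{op}}=\lambda$, not just $\norm{\widehat H}_\infty=\lambda$, needs the aligning $\theta^\star$ to lie in $\Theta_N$, which holds for $\theta^\star\in\{0,\pi\}^d$ since $N=2^n$ is even.
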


\begin{proof}
By Theorem \ref{thm:diagonalization}, $A_h = F^{-1}\,\widehat H\,F$ where $F$ is unitary and $\widehat H$ is diagonal with entries $\{\widehat H(\theta)\}_{\theta\in\Theta_N}$. The singular values evaluated from the right-hand side of $A_h$ are therefore $\{|\widehat H(\theta)|\}_{\theta\in\Theta_N}$. The first equality follows from the definition of the operator norm as its maximum. The discrete maximum is bounded by the continuous supremum, giving the second relation. Finally, the third follows by applying the triangle inequality, which gives, for every
$\theta$,
\[
    \abs{\widehat H(\theta)}
    =
    \Big|\sum_{t\in T} c_t\,e^{-\ii\,\theta\cdot t}\Big|
    \;\leq\;
    \sum_{t\in T}\abs{c_t}
    =
    \lambda,
\]
with equality precisely when all summed terms have a common phase in $\C$. Trivially, if $c_t\geq 0$ for all $t$, this holds at $\theta^\star = 0$.
\end{proof}
\begin{remark} 
This means that for any $(\lambda', m_a, 0)$ block encoding, $A_h/\lambda'$ is a corner block of a unitary, hence $\|A_h/\lambda'\|_{op} \le 1$ and $\lambda' \ge \|A_h\|_{op}$. The operator norm is therefore a floor over all exact encodings, and the common-phase condition certifies when the shift-LCU attains it. 
\end{remark}

When $\lambda = \lambda_{opt} = \norm{\widehat H}_\infty$, the LCU construction is optimal in terms of subnormalization, and no alternative construction can improve it. For the cases where $\lambda > \norm{\widehat H}_\infty$, which is strictly greater than the lower bound, the symbol exhibits cancellation that the $\ell^1$ weight does not see, and an operator-specific encoding may close the gap. For the symmetric Laplacian stencil, our framework produces the optimal subnormalization, whereas for the difference-of-Gaussian kernel, the $ \|\widehat H\|_\infty<\lambda=2< \lambda_{opt}$ bound \cite{mahmud2026explicit}, although it facilitates better circuit construction, decreases the probability of success.

In this section, we derived the closed-form expression and discussed the optimal subnormalization for the block-encoding success probability. The closed-form expression, Eq.~\eqref{eq:succ-closedform}, is exact but does not by itself yield a useful input-independent guarantee. However, it gives the following insight: For a moment order (Eq.~\eqref{def:moments}) $m\geq1$, $\widehat H(0)=\mu_0=0$ by Theorem \ref{thm:symbol-expansion}, the symbol vanishes in the set
\[
    Z \;:=\; \{\theta \in [-\pi, \pi]^d: \widehat H (\theta) \;=\; 0 \} \;\ni\; 0,
\]
which means that $\widehat H$ is minimum whenever the input carries a spectral mass on or near the zero set $Z$. A guarantee for the success probability will therefore require restricting ourselves to inputs that are bounded away from $Z$. In Section \ref{sec:safe-band-psuc}, we quantify this observation to provide a bound for the probability of success in terms of the moment order $m$, the distance $\delta$ to the zero set, and spectral mass $\eta$ retained in the defined ``safe-band".

\section{Safe-Band Success-Probability Bounds}\label{sec:safe-band-psuc}

As observed from the closed-form expression in Eq.~\eqref{eq:succ-closedform}, the symbol vanishes in the zero set regardless of the input. The route to deriving input-uniform bounds on the success probability involves restricting our attention to inputs whose spectral mass lies in the regions away from the symbol's zero set. A central observation is that the differential operators act trivially on near-constant data. In this Section, we use this observation and the insights from the previous section to provide a concrete claim: for an elliptic operator of moment order $m$, every input retaining spectral mass $\eta_\delta$ outside a $\delta$ neighborhood of $Z$ succeeds with probability at least
\[
    p_{\mathrm{succ}}(\ket{v})
    \;\geq\;
    \frac{\kappa_m^2\,\delta^{2m}}{4\,\lambda^2}\,\eta_\delta(\ket{v}),
\]
where $\kappa_m$ is the moment constant, the spectral mass is $\eta_\delta$, and $\lambda$ is the block encoding subnormalization. We start by defining a safe-band around the zero set, bound $\widehat H$ from below on it as a function of $m$, and use it to arrive at the $P_{succ}$ guarantee.

\begin{definition}[Zero set, safe-band, and spectral mass]
\label{def:safe-band}
For a shift-stencil operator $A_h$ with symbol $\widehat H$, its
\emph{zero set} is
\[
    Z
    \;:=\;
    \{\theta\in[-\pi,\pi]^d : \widehat H(\theta) = 0\}.
\]

For $\delta > 0$ the \emph{safe-band} is
\[
    \Omega_\delta
    \;:=\;
    \big\{\theta\in[-\pi,\pi]^d :
    \dist_2(\theta, Z)\geq \delta\big\},
    \qquad
    \dist_2(\theta,Z) := \min_{\zeta\in Z}\dist_{\mathbb T}(\theta,\zeta),
\]
and the \emph{safe-band spectral mass} of a normalized state
$\ket{v} = \sum_\theta \tilde v(\theta)\ket{\chi_\theta}$ is
\[
    \eta_\delta(\ket{v})
    \;:=\;
    \sum_{\theta\in\Theta_N\cap\,\Omega_\delta}
    \abs{\tilde v(\theta)}^2
    \;\in\;[0,1].
\]
\end{definition}

\subsection{Lower bound of the symbol}\label{subsec:lowerbound}
To develop a guarantee for success probability, we must first find a lower bound for $\abs{\widehat H}$ inside the defined safe-band. The Taylor series describes the symbol near $\theta=0$ where the behavior, as per Theorem \ref{thm:symbol-expansion}, is $\widehat H(\theta)\approx(-\ii)^m P_m(\theta)$ where $P_m$ is the homogeneous polynomial Eq.~\eqref{eq:moment-order-poly}. The symbol vanishes to order $m$, and the manner of that vanishing is encoded in the shape of $P_m$. This allows us to develop a lower bound for the $P_m$ and consequently for $\abs{\widehat H}$.

We analyze the shape of the symbol by using a vector representation of $\theta$: $\theta=\norm{\theta}_2 \xi$, which is the normalized unit vector, $\xi$, multiplied by the vector norm. Since, according to our definition, Eq.~\eqref{eq:moment-order-poly}, $P_m$ is a homogeneous polynomial of order $m$, $\abs{P_m(\theta)}=\norm{\theta}_2^m\abs{P_m(\xi)}$. This representation allows analysis of the shape of $P_m$ on the unit sphere described by $\xi$.

If the operator is elliptic, the rate at which $P_m$ climbs out of $0$ will be non-zero in all directions. We define a minimum rate as
\begin{equation}
\kappa_m
\;:=\;
\min_{\norm{\xi}_2 = 1}\abs{P_m(\xi)}
\;>\;0,
\label{eq:ellipticity}
\end{equation}
which directly asserts a bound $|P_m(\theta)| \ge \kappa_m \|\theta\|_2^m$ by homogeneity, inheriting this bound near the origin (as we show in the next lemma, \ref{lem:symbol-lower}). We observe that several useful operators are not only elliptic but also isotropic, meaning that the rate $\kappa_m$ is the same in all directions. The Laplacian case is an isotropic example in which the rate is constant, $\kappa_2=\tfrac14$, yielding $P_2(\xi)=\tfrac14\norm{\xi}_2^2$, which consequently produces the leading-order value of the symbol. We defer the full derivation of the quantities to Section \ref{sec:applications}.

To develop a universal lower bound for the symbol, we divide the Brillouin zone into two parts: (i) a neighborhood of the origin, where the leading-order moment analysis applies directly, together with homogeneity and ellipticity, and (ii) the region away from the origin, where the remainder term may become significant, and compactness is used. This leads to the following lemma, which establishes a lower bound for the symbol. For simplicity, we first assume that the origin is the only zero of the symbol, and later discuss how this assumption can be relaxed.

\begin{lemma}[Safe-band lower bound on the symbol]
\label{lem:symbol-lower}
Let $A_h$ have moment order $m\geq 1$, let the origin be the only zero of the symbol, $Z=\{0\}$, and let $P_m$ be elliptic (satisfying Eq.~\eqref{eq:ellipticity}). Define the remainder term of the symbol from Eq.~\eqref{eq:symbol-remainder} bound as $B:= \tfrac{1}{(m+1)!}\sum_{t\in T}\abs{c_t}\norm{t}_2^{m+1}$ and 
\begin{equation}
    \rho
    \;:=\;
    \min\!\left\{\pi,\;\frac{\kappa_m}{2B}\right\},
    \qquad
    c_\star
    \;:=\;
    \min_{\norm{\theta}_2\geq\rho}\abs{\widehat H(\theta)} \;>\;0 .
    \label{eq:rho-cstar}
\end{equation}
Then for every $\theta\in[-\pi,\pi]^d$, 
\begin{equation}
    \abs{\widehat H(\theta)}
    \;\geq\;
    \begin{cases}
        \dfrac{\kappa_m}{2}\,\norm{\theta}_2^{\,m},
        & \norm{\theta}_2 < \rho,\\[2ex]
        c_\star,
        & \norm{\theta}_2 \geq \rho ,
    \end{cases}
    \label{eq:symbol-lower-piecewise}
\end{equation}
and consequently, for every $\delta>0$, with $g_m(\delta):=\tfrac12\min\{\kappa_m\,\delta^m,\,2c_\star\}$,
\begin{equation}
    \abs{\widehat H(\theta)}
    \;\geq\;
    g_m(\delta)
    \qquad\text{for all }\theta\in\Omega_\delta \;\ \text{as defined in Def.~\ref{def:safe-band}.}
    \label{eq:symbol-lower-uniform}
\end{equation}
\end{lemma}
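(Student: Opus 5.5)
The proof splits the zone $[-\pi,\pi]^d$ at radius $\rho$ (in the torus distance to the origin, which for representatives in the closed box coincides with $\norm{\theta}_2$ when $\norm{\theta}_2\le\pi$). For the inner region $\norm{\theta}_2<\rho$, I invoke Theorem~\ref{thm:symbol-expansion} with $r=m$. By the moment-order hypothesis, all terms with $|\alpha|<m$ drop, so
\[
\widehat H(\theta)=(-\ii)^m P_m(\theta)+R_{m+1}(\theta),\qquad \abs{R_{m+1}(\theta)}\le B\,\norm{\theta}_2^{m+1}.
\]
Homogeneity of $P_m$ together with Eq.~\eqref{eq:ellipticity} gives $\abs{P_m(\theta)}\ge\kappa_m\norm{\theta}_2^m$. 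The reverse triangle inequality then yields
\[
\abs{\widehat H(\theta)}\ge \norm{\theta}_2^m\bigl(\kappa_m-B\norm{\theta}_2\bigr)\ge \tfrac{\kappa_m}{2}\norm{\theta}_2^m,
\]
since $\norm{\theta}_2<\rho\le\kappa_m/(2B)$. If $B=0$, the stencil is trivial and $\rho=\pi$, so this case is harmless.

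For the outer region $\norm{\theta}_2\ge\rho$, the bound $\abs{\widehat H}\ge c_\star$ holds by the definition of $c_\star$. What remains is to justify that $c_\star$ is a minimum and is positive. The set $K:=\{\theta\in[-\pi,\pi]^d:\norm{\theta}_2\ge\rho\}$ is closed and bounded, hence compact. $\widehat H$ is continuous (it is a trigonometric polynomial), so $\abs{\widehat H}$ attains its minimum on $K$. Because $Z=\{0\}$ and $0\notin K$ (as $\rho>0$, using $\kappa_m>0$), that minimum is strictly positive. Here the hypothesis $Z=\{0\}$ must be read on the torus: no representative of a nonzero torus point, including the face points identified with one another, is a zero. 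Since $\widehat H$ is $2\pi$-periodic, this is consistent.

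For the uniform bound \eqref{eq:symbol-lower-uniform}, take $\theta\in\Omega_\delta$, so that $\dist_{\mathbb T}(\theta,0)\ge\delta$. If $\norm{\theta}_2\ge\rho$, then $\abs{\widehat H(\theta)}\ge c_\star\ge g_m(\delta)$. If $\norm{\theta}_2<\rho\le\pi$, I need $\norm{\theta}_2\ge\dist_{\mathbb T}(\theta,0)\ge\delta$. This holds trivially, since the torus distance is a minimum over lattice translates and the $k=0$ term is $\norm{\theta}_2$. Then $\abs{\widehat H(\theta)}\ge\tfrac{\kappa_m}{2}\delta^m\ge g_m(\delta)$.

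The main obstacle is the compactness step. One must make sure that the single-zero hypothesis genuinely excludes zeros at the zone boundary $\theta\equiv\pi$ (for instance, a centered first difference vanishes there), since otherwise $c_\star=0$. I would state explicitly that $Z$ is understood modulo $2\pi$, and note that the lemma's hypothesis is exactly what rules this out.
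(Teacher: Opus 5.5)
Your proposal is correct and follows the paper's proof essentially step for step. Inside radius $\rho$ you use the Taylor--moment expansion with the reverse triangle inequality; outside it you use compactness of $\{\theta\in[-\pi,\pi]^d:\norm{\theta}_2\ge\rho\}$ together with $Z=\{0\}$ to get $c_\star>0$; and you finish with the same case split for $g_m(\delta)$. You also spell out two details the paper leaves implicit: $c_\star$ must be a minimum over the box rather than over $\R^d$ (where the points $2\pi k$ are zeros), and $\norm{\theta}_2\ge\dist_{\mathbb T}(\theta,0)\ge\delta$ on $\Omega_\delta$.
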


\begin{proof}
\emph{Near the origin.}
By Theorem \ref{thm:symbol-expansion} with $\mu_\alpha=0$ for
$\abs{\alpha}<m$,
\[
    \widehat H(\theta)
    =
    (-\ii)^m P_m(\theta) + R_{m+1}(\theta),
    \qquad
    \abs{R_{m+1}(\theta)}\leq B\,\norm{\theta}_2^{m+1}.
\]
With $\abs{P_m(\theta)}\geq\kappa_m\norm{\theta}_2^m$ from
Eq.~\eqref{eq:ellipticity}, the reverse triangle inequality gives
\[
    \abs{\widehat H(\theta)}
    \;\geq\;
    \norm{\theta}_2^m\big(\kappa_m-B\norm{\theta}_2\big),
\]
and for $\norm{\theta}_2<\rho\leq\kappa_m/(2B)$, the right hand side of the inequality is larger than $\tfrac12\kappa_m\norm{\theta}_2^m$. This is exactly the expected behavior near the origin, where the leading term is significant compared to the remainder and concludes the proof for the first branch in Eq.~\eqref{eq:symbol-lower-piecewise}.

\emph{Away from the origin.}
The set $\{\theta:\norm{\theta}_2\geq\rho\}$ is compact and, since the zero at the origin is assumed to be the only zero $(Z=\{0\})$, the continuous function $\abs{\widehat H}$ attains a strictly positive minimum $ c_\star$. This proves the second branch in Eq.~\eqref{eq:symbol-lower-piecewise}.

\emph{Uniform bound.}
Now we find the uniform bound over the safe-band: For $\theta$ with $\norm{\theta}_2<\rho$ the first branch gives $\abs{\widehat H(\theta)}\geq\tfrac12\kappa_m\delta^m$; for $\norm{\theta}_2\geq\rho$ the second gives $c_\star$. The smaller of the two is the uniform bound.
\end{proof}

Given the lower bound for $H$, we lift up one of the key assumptions that the operator is elliptic. If the operator is non-elliptic, then the moment polynomial $P_m$ will vanish in some direction as it moves away from a zero, implying that we no longer have a guaranteed positive minimum rate $\kappa_m$ at which the symbol climbs out. However, we can use the \L ojasiewics inequality to prove the existence of the bound in non-elliptic conditions. Since the Fourier symbol $\widehat {H} $ for difference operators is trigonometric and therefore real-analytic, \L ojasiewicz inequality guarantees that $\abs{\widehat H}$ has a lower bound of the form $\abs{\widehat H(\theta)}\geq c\,\dist_2(\theta, Z)^{L}$ in the neighborhood of a zero. Here, $L\geq m$ and $c$ is a positive constant.

Finally, we lift up our assumption that the zero at the origin is the only zero. If $\widehat H$ vanishes away from the origin, the safe-band defined already has the notion of neighborhood from each of those zeroes in the form of $\dist_2(\theta, Z)$ for all elements in $Z$. The floor of all the piecewise lower bounds constructed as per Lemma \ref{lem:symbol-lower} and the \L ojasiewicz inequality becomes the minimum of the per-zero contributions, each governed by the local order at its zero.

\subsection{Success Probability}
We now use Lemma \ref{lem:symbol-lower} to develop the block encoding success probability guarantee. 
\begin{theorem}[Safe-band success probability]
For every normalized input $\ket{v}$ and every $\delta>0$,
\begin{equation}
    p_{\mathrm{succ}}(\ket{v})
    \;\geq\;
    \frac{g_m(\delta)^2}{\lambda^2}\,
    \eta_\delta(\ket{v}),
    \label{eq:safe-band-bound}
\end{equation}
and in the small-$\delta$ regime $\delta\leq(2c_\star/\kappa_m)^{1/m}$, where $g_m(\delta)=\tfrac12\kappa_m\delta^m$,
\begin{equation}
    p_{\mathrm{succ}}(\ket{v})
    \;\geq\;
    \frac{\kappa_m^2\,\delta^{2m}}{4\,\lambda^2}\,
    \eta_\delta(\ket{v}).
    \label{eq:safe-band-smalldelta}
\end{equation}
\label{thm:safe-band}
\end{theorem}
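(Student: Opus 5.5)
The bound should follow by combining the closed-form success probability of Theorem~\ref{thm:success-closedform} with the pointwise safe-band floor of Lemma~\ref{lem:symbol-lower}. I will work under the hypotheses of that lemma: moment order $m\ge 1$, $Z=\{0\}$, and $P_m$ elliptic. By Eq.~\eqref{eq:succ-closedform},
\[
p_{\mathrm{succ}}(\ket{v}) = \frac{1}{\lambda^2}\sum_{\theta\in\Theta_N}\abs{\widehat H(\theta)}^2\abs{\tilde v(\theta)}^2 .
\]
Every summand is nonnegative, so I will drop all $\theta\in\Theta_N\setminus\Omega_\delta$ and keep only the safe-band frequencies. This gives a lower bound over $\Theta_N\cap\Omega_\delta$.

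On the remaining frequencies, Eq.~\eqref{eq:symbol-lower-uniform} gives $\abs{\widehat H(\theta)}\ge g_m(\delta)\ge 0$, hence $\abs{\widehat H(\theta)}^2\ge g_m(\delta)^2$. I will pull this constant out of the sum. What remains is $\sum_{\theta\in\Theta_N\cap\Omega_\delta}\abs{\tilde v(\theta)}^2$, which is $\eta_\delta(\ket{v})$ by Definition~\ref{def:safe-band}, and this yields Eq.~\eqref{eq:safe-band-bound}.

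One point needs care. Grid points $\theta_k=2\pi k/N$ are torus points, while the lemma is stated for representatives in $[-\pi,\pi]^d$. I will pick each grid point's representative in the closed box, as fixed by the frequency convention. For a zero set $Z=\{0\}$, the torus distance $\dist_2(\theta,Z)$ equals $\norm{\theta}_2$ for that representative, so the lemma applies verbatim. If $\Theta_N\cap\Omega_\delta$ is empty, then $\eta_\delta=0$ and the bound is trivial.

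For Eq.~\eqref{eq:safe-band-smalldelta}, I note that $\delta\le(2c_\star/\kappa_m)^{1/m}$ is equivalent to $\kappa_m\delta^m\le 2c_\star$. In that case $g_m(\delta)=\tfrac12\kappa_m\delta^m$, and squaring gives $\kappa_m^2\delta^{2m}/4$.

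The main obstacle is not algebraic; it is checking that the uniform bound in Lemma~\ref{lem:symbol-lower} is actually justified. The first branch there uses $\norm{\theta}_2\ge\delta$ on $\Omega_\delta$ to obtain $\tfrac12\kappa_m\norm{\theta}_2^m\ge\tfrac12\kappa_m\delta^m$. The second branch gives $c_\star\ge g_m(\delta)$, because $g_m(\delta)\le c_\star$ by definition. Both branches therefore dominate $g_m(\delta)$, and I would restate this explicitly before invoking the lemma.
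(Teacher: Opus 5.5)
Your proposal is correct and follows the paper's proof: drop the non-safe-band terms from the closed form of Theorem~\ref{thm:success-closedform}, apply the uniform floor $\abs{\widehat H(\theta)}\ge g_m(\delta)$ on $\Omega_\delta$ from Lemma~\ref{lem:symbol-lower}, and substitute $g_m(\delta)=\tfrac12\kappa_m\delta^m$ when $\kappa_m\delta^m\le 2c_\star$. Your extra checks are sound details that the paper leaves implicit: stating the lemma's hypotheses, identifying the torus distance to $Z=\{0\}$ with $\norm{\theta}_2$ for box representatives, and verifying that both branches dominate $g_m(\delta)$.
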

\begin{proof}
Retaining only the safe-band frequencies in the closed form Eq. \eqref{eq:succ-closedform} and applying $\abs{\widehat H(\theta)}\geq g_m(\delta)$ on $\Omega_\delta$ (Lemma \ref{lem:symbol-lower}),
\[
    p_{\mathrm{succ}}(\ket{v})
    \;\geq\;
    \frac{1}{\lambda^2}
    \sum_{\theta\in\Theta_N\cap\,\Omega_\delta}
    \abs{\widehat H(\theta)}^2\abs{\tilde v(\theta)}^2
    \;\geq\;
    \frac{g_m(\delta)^2}{\lambda^2}
    \sum_{\theta\in\Theta_N\cap\,\Omega_\delta}\abs{\tilde v(\theta)}^2
    =
    \frac{g_m(\delta)^2}{\lambda^2}\,\eta_\delta(\ket{v}),
\]
which gives Eq.~\eqref{eq:safe-band-bound}. The small-$\delta$ form can be recovered by the substitution $g_m(\delta)=\tfrac12\kappa_m\delta^m$.
\end{proof}
The bound derived in Eq.~\eqref{eq:safe-band-smalldelta} is the main result of this section. It represents the lower bound of $P_{succ}$ factorizing the operator term $\kappa_m^2\delta^{2m}$, the cost term $\lambda^{-2}$, and the input term $\eta_\delta$. The operator term is read off from moments alone, where the moments appear as exponents over the $\delta$ and the ellipticity constant factor.  The exponent $\delta^{2m}$ is the quantitative signature of the moment order: a higher-order stencil vanishes more smoothly at its zero set and thus incurs a steeper penalty as the input approaches the zero set. This clean separation of operator, cost, and input is invoked in Section \ref{sec:applications} for different families. We use the next section to further discuss the usefulness of Theorem \ref{thm:safe-band} and motivate its significance with an example.

\subsection{A class guarantee from moment data} 
\label{subsec:class-guarantee}

The block encoding circuit in itself does not need any knowledge of input for construction and is entirely determined by stencil coefficients and the offset-set. Moreover, for any given input, the exact success probability can be computed using Eq. \eqref{eq:succ-closedform}, and we observe that the moment-based lower bound is conservative relative, and we observe that the moment-based lower bound is conservative in relation to it (i.e., it never exceeds the exact value). Theorem \ref{thm:safe-band}, therefore, is not intended to provide sharper bounds for the probability of success for a per-input calculation but rather serves as a bridge from local moment information into an explicit guarantee for an entire class of inputs.

The bound makes the moment data relate to quantum cost. Eq.~\eqref{eq:safe-band-smalldelta} introduces the moment data directly into a success-probability guarantee through the scaling $\delta^{2m}$ and the ellipticity constant $\kappa_m$. The moment order thus controls a quantum cost quantity at the operator level, rather than appearing only implicitly after the full symbol is evaluated on a chosen input. This is the third component of the moment correspondence, and it holds uniformly in the grid size.

The bound in Eq.~\eqref{eq:safe-band-smalldelta} is nearly optimal over the input class it certifies. Among all normalized inputs satisfying
\[
    \eta_\delta(\ket{v}) \geq \eta_0,
\]
consider an input that places spectral mass $\eta_0$ on a safe-band frequency at which the symbol magnitude is smallest, and the remaining mass $1-\eta_0$ on the kernel mode $\theta=0$. Its exact success probability is
\[
    \frac{m_N(\delta)^2\,\eta_0}{\lambda^2},
\]
where
\[
    m_N(\delta)
    :=
    \min_{\substack{\theta\in\Theta_N\\
    \operatorname{dist}_2(\theta,Z)\geq\delta}}
    \left|\widehat H(\theta)\right|
\]
is the exact minimum of the symbol magnitude over the safe-band.
Consequently,
\[
    \inf_{\ket{v}:\,\eta_\delta(\ket{v})\geq\eta_0}
    p_{\mathrm{succ}}(\ket{v})
    =
    \frac{m_N(\delta)^2\,\eta_0}{\lambda^2}.
\]
Thus, the dependence on the input class, represented entirely by $\eta_0$, is sharp. The only conservatism in Eq.~\eqref{eq:safe-band-smalldelta} arises from replacing the exact quantity $m_N(\delta)$ with its moment-based lower bound, $g_m(\delta)$.

Finally, we present an example that demonstrates the usefulness of the bound and motivates its use for estimating block encoding cost. For operators that share a zero set, a normalized convention, and a safe-band regime, our class guarantee provides insights into their block encoding comparison. Consider two stencils that both discretize the $\partial_x^2$ operator. The standard second order differential stencil has the coefficients $\tfrac{1}{4}(1,-2,1)$ and the fourth-order-accurate refinement 
\begin{equation} 
    \tfrac{1}{64}\,(-1,\,16,\,-30,\,16,\,-1). 
\end{equation}
Both stencils are normalized, so $\norm{\widehat H}_\infty = 1$, the moment order is $m=2$, and we observe that $\lambda = \norm{\widehat H}_\infty = 1$, which makes the representation subnormalization optimal. The phases align at $\theta=\pi$, and therefore the operator-norm chain in Eq.~\eqref{eq:opnorm-chain} closes, thereby certifying its optimality (Proposition \ref{prop:opnorm}). This class guarantee, however, separates them. The symbols are the following.
\begin{align} 
    \widehat H_2(\theta) &= -\sin^2\!\Big(\tfrac{\theta}{2}\Big), \\ \widehat H_4(\theta) &= -\tfrac{1}{4}\,\sin^2\!\Big(\tfrac{\theta}{2}\Big) \Big(3 + \sin^2\!\Big(\tfrac{\theta}{2}\Big)\Big)
\end{align} 
Both symbols vanish to second order at the origin, but the leading coefficients give different ellipticity constants
\begin{equation} 
    \kappa_2^{(2)} = \tfrac{1}{4}, \qquad \kappa_2^{(4)} = \tfrac{3}{16}.
\end{equation}
Since the two stencils share $m$, $\lambda$, (for a fixed $\delta$) the safe-band and and $\mathcal{V}_{\delta,\eta_0} := \{\ket{v} : \eta_\delta(\ket v)\ge\eta_0\}$ their certified floors,
\begin{equation}
    \inf_{\ket{v}\in\mathcal{V}_{\delta,\eta_0}} p_{\mathrm{succ}}(\ket{v}) \;\geq\; \frac{\kappa_m^2\,\delta^{2m}}{4\lambda^2}\,\eta_0,
\end{equation}
have the following ratio
\begin{equation} 
    \frac{p_{\mathrm{floor}}^{(4)}}{p_{\mathrm{floor}}^{(2)}} = \bigg(\frac{\kappa_2^{(4)}}{\kappa_2^{(2)}}\bigg)^2 = \bigg(\frac{3/16}{1/4}\bigg)^2 = \frac{9}{16}.
\end{equation} 
Throughout the common small-$\delta$ regime, the fourth-order stencil achieves a certified floor $43.75\%$ smaller than that of the standard stencil, at identical $\lambda$, $\delta$, and $\eta_0$. This is a direct implication for their probability of success and repeat-until-success post-selection. The comparison exposes a genuine tension between classical accuracy and postselection performance under coefficient normalization: the fourth-order stencil eliminates the second-order truncation error (its fourth moment vanishes, giving error $O(h^4)$ rather than $O(h^2)$), but its smaller normalized leading coefficient lowers the certified success floor. Furthermore, representation has significant applications in stencil design and optimization, which are discussed as potential future work in Section \ref{sec:discussion}.

\section{Applications}\label{sec:applications}
In this section, we utilize the moment framework detailed above to analyze two classes of PDE operators: symmetric even-order stencils and asymmetric biased advection-diffusion stencils.

\subsection{Symmetric even-order stencils}\label{subsec:app-symmetric}
The first family comprises stencils whose coefficients are symmetric under reflection $c_{-t} = c_t$. Some of the most widely used operators have this property, including, but not limited to, the Laplacian, the biharmonic, and their higher-order polynomial analogs. We start by discussing properties shared by the family, then work out the  Laplacian and biharmonic operators as examples.

Applying the coefficient symmetry conditions, each offset $t$ has a matching $-t$ in $\mu_\alpha = \sum_t c_t\,t^\alpha$. Imposing the symmetry conditions $c_{-t} = c_t$,
\begin{equation}
    c_t\,t^\alpha + c_{-t}\,(-t)^\alpha
    = c_t\,t^\alpha\big(1 + (-1)^{\abs{\alpha}}\big).
    \label{eq:sym-parity}
\end{equation}
This vanishes whenever $\abs{\alpha}$ is odd, and therefore every odd-order moment of a symmetric stencil is $0$, thus the moment order is even. The next consequence of Eq. \eqref{eq:sym-parity} is that the symbols for symmetric stencils are real. Since, $\widehat H(\theta) = \sum_t c_t e^{-\ii\theta\cdot t}$, pairing up the symmetric coefficients collapses the $\widehat H(\theta)$ to a sum of cosines. By symmetry, the symbol is real and has even moments. For each operator treated below, we verify directly that their zero set is $Z=\{0\}$ and that they are elliptic. Therefore, the hypotheses of Lemma \ref{lem:symbol-lower} and Theorem \ref{thm:safe-band} hold for the entire family, with the even-moment order $m$ entering as the vanishing exponent.

Furthermore, we make three more key observations based on the fact that $m$ is restricted to even integers. First, by Corollary \ref{cor:recovery} the recovered continuum operator is the order-$m$ derivative $D_m = \tfrac{\mu_m}{m!}\,\partial_x^m$  (the Laplacian at $m=2$, the biharmonic $\partial_x^4$ at $m=4$). Second, by Theorem \ref{thm:symbol-expansion}, the symbol of the stencil vanishes up to order $m$ at the origin $\widehat H(\theta) = (-\ii)^m\tfrac{\mu_m}{m!}\theta^m + O(\theta^{m+1})$. Finally, by Theorem \ref{thm:safe-band}, the success probability scales as $\Theta(h^{2m})$ on band-restricted inputs. It is evident that an increase in the moment order, $m$, improves the derivative's approximation order, increases the depth of the zero at the origin, and steepens the asymptotic success probability near the zero set. The moment order is the sole controlling factor affecting all three key block-encoding parameters of the stencil.

\subsubsection{Laplacian $(m=2)$}. 
We now look at the specific case of the Laplacian, which has a moment order of $m=2$. The discrete Laplacian is one of the most widely used operators in the finite-difference literature \cite{oberman2013finite, huang2014numerical, duo2018novel, duncan2003accuracy} and has block-encoding circuits that produce optimal subnormalization ($\lambda=1$) \cite{sturm2025efficient}. In this section, we demonstrate that our proposed framework reproduces these optimal circuits and, furthermore, that it explains the empirical evidence presented in the literature.

\paragraph{Moments and Moment Polynomial:} We start by evaluating the simplest case of a 1-$D$ Laplacian, which has the finite difference coefficients
\begin{equation}
    c_{-1} = \tfrac14,\qquad c_0 = -\tfrac12,\qquad c_{+1} = \tfrac14,
    \label{eq:lap-stencil}
\end{equation}
which are the normalized coefficients for a second-difference stencil on $T = \{-1,0,1\}$. The zero-th and first-order moments vanish $\mu_0 = \tfrac14 - \tfrac12 + \tfrac14 = 0$ and
$\mu_1 = -\tfrac14 + \tfrac14 = 0$, while
\begin{equation}
    \mu_2 = \sum_{t}c_t\,t^2 = \tfrac14(1) + \tfrac14(1) = \tfrac12 \neq 0,
    \label{eq:lap-mu2}
\end{equation}
which means that the moment order is $m = 2$. By Corollary \ref{cor:recovery} the stencil
approximates the second-derivative operator $D_2 = \tfrac{\mu_2}{2!}\partial_x^2 = \tfrac14\partial_x^2$. The moment polynomial for the Laplacian then is $P_2(\xi) = \tfrac{\mu_2}{2!}\xi^2 = \tfrac14\xi^2$. Since the polynomial is constant on the unit sphere, $\abs{\xi} = 1$, the operator is elliptic and isotropic with
\begin{equation}
    \kappa_2 = \min_{\abs{\xi}=1}\abs{P_2(\xi)} = \tfrac14 .
    \label{eq:lap-kappa}
\end{equation}
We can lift this up to higher dimensions, trivially from the $1$-dimensional case by observing that the $d$-dimensional Laplacian is simply the axis-sum of one-dimensional second differences and is therefore separable. Then for $\lambda=1$ normalized $d$-dimensional Laplacian, $\mu_{2} = 1/(2d)$ for each of the basis directions and hence $P_2(\xi) = \|\xi\|_2^2/(4d)$, $\kappa_2 = 1/(4d)$. Thus, the safe-band floor carries the factor $1/d^2$, matching that calculated in Sturm-Schillo \cite{sturm2025efficient}.

\paragraph{The Symbol:} The Fourier symbol for the Laplacian as per Theorem \ref{thm:diagonalization} is 
\begin{equation}
    \widehat H(\theta)
    = \tfrac14 e^{-\ii\theta} - \tfrac12 + \tfrac14 e^{\ii\theta}
    = \tfrac12(\cos\theta - 1).
    \label{eq:lap-symbol}
\end{equation}
Observing the symbol, it has only one zero at $\theta = 0$, so the set $Z = \{0\}$ and the hypotheses of Lemma \ref{lem:symbol-lower} hold. This will enable us to calculate explicit constants for the safe-band. The supremum of the symbol is attained at
$\theta = \pi$,
\begin{equation}
    \norm{\widehat H}_\infty = \tfrac12\abs{\cos\pi - 1} = 1,
    \label{eq:lap-norm}
\end{equation}
and the subnormalization of the normalized stencil is $\lambda = \sum_t\abs{c_t} = \tfrac14 + \tfrac12 + \tfrac14 = 1$. The operator norm  (Eq.~\eqref{eq:opnorm-chain}) is then
\begin{equation}
    \norm{A_h}_{\mathrm{op}} = \norm{\widehat H}_\infty = \lambda = 1 .
    \label{eq:lap-optimal}
\end{equation}
By our Proposition \ref{prop:opnorm}, since the gap between the supremum and the subnormalization is $0$, this LCU construction of the operator is subnormalization-optimal. This implies that there exists no $(\lambda',m_a,0)$ block encoding of this operator with $\lambda' < 1$. We have recovered the same optimality from \cite{sturm2025efficient} through a separate general operator-norm relation. Our framework not only matches their subnormalization but also certifies that the value is optimal by showing that the symbol saturates the $\ell^1$ bound, which means that all stencil terms align in phase at $\theta = \pi$.

\paragraph{Observations regarding $P_{succ}$:} We now calculate the closed-form success probability using Theorem \ref{thm:success-closedform} on a single Fourier mode $\ket{\chi_\theta}$:
\begin{equation}
    p_{\mathrm{succ}}(\ket{\chi_\theta})
    = \abs{\widehat H(\theta)}^2
    = \tfrac14(1 - \cos\theta)^2
    \;\xrightarrow[\theta\to0]{}\;
    \frac{\theta^4}{16}.
    \label{eq:lap-psucc-mode}
\end{equation}
where $\lambda$ has been dropped as $\lambda=1$ and $\ket{\chi_\theta}$ is normalized. A physical mode of integer wavenumber $k$ is at the grid frequency $\theta = 2\pi k h$ (Definition \ref{def:brillouin}), and therefore the success probability in terms of $k$, as $h \to 0 $ is
\begin{equation}
    p_{\mathrm{succ}} \;\xrightarrow[h\to0]{}\; \frac{(2\pi k)^4}{16}\,h^4
    = C_k\,h^4,
    \qquad
    C_k = \frac{(2\pi k)^4}{16}.
    \label{eq:lap-Ck}
\end{equation}
Sturm and Schillo in \cite{sturm2025efficient} consider two one-dimensional test functions $ v_1(x) = \sin(2\pi x)$ and $v_2(x) = \cos(6\pi x)$ for which they calculate the effect of the Laplacian $ L_1v_1 = -(2\pi)^2 v_1$ and $L_1v_2 = -(6\pi)^2 v_2$. Their analysis produces the success probability constants $C_1 = \frac{(2\pi)^4}{16} = \pi^4$ and $C_2 = \frac{(6\pi)^4}{16} = 81\pi^4$ respectively. 

Our moment framework recovers the same constants that correspond exactly to the physical wavenumbers $k=1$ and $k=3$ in Eq.~\eqref{eq:lap-Ck}, producing  $C_{k=1} = \pi^4$ and $C_{k=3} = 81\pi^4$. The moment framework not only produces these constants but also identifies their origins, which are tied back to moments. The Laplacian symbol has an order-$2$ zero at the origin,
\begin{equation}
    \widehat H(\theta)
    =
    -\frac{\theta^2}{4}
    +
    O(\theta^4),
    \qquad \theta \to 0.
    \label{eq:lap-symbol-local}
\end{equation}
This order is exactly the moment order $m=2$, and the success probability of any fixed physical mode scales as $\Theta(h^{2m})$. For the Fourier mode $\ket{\chi_{k,h}}$ at integer wavenumber $k$, the prefactor is explicit: $p_{\mathrm{succ}}(\ket{\chi_{k,h}}) \sim \tfrac{(2\pi k)^4}{16}\, h^4$.
Thus, the ratio of the two test functions is

\begin{equation}
    \frac{p_{\mathrm{succ}}(\ket{\chi_{3,h}})}
         {p_{\mathrm{succ}}(\ket{\chi_{1,h}})}
    \xrightarrow[h\to0]{}
    \frac{C_{3}}{C_{1}}
    =
    \left(\frac{3}{1}\right)^{2m}
    =
    3^4
    =
    81.
    \label{eq:lap-frequency-ratio}
\end{equation}
This shows that the higher success probability of the more oscillatory input is not a feature of the Laplacian operator but rather a special $m=2$ case of the general moment-order law. An order-$m$ stencil suppresses low-frequency Fourier modes by a factor of $h^m$ at the amplitude level and, hence, by a factor of $h^{2m}$ at the success-probability level.

\paragraph{Safe-Band Analysis:} We apply our Theorem \ref{thm:safe-band} here with the parameters $\lambda = 1$, $m = 2$, $\kappa_2 = \frac14$.
\begin{equation} 
    p_{\mathrm{succ}} \;\geq\; \frac{\kappa_2^2\,\delta^4}{4}
    = \frac{\delta^4}{64}
    = \frac{(2\pi k_{\min})^4}{64}\,h^4 .
    \label{eq:lap-safeband}
\end{equation}
for band restricted inputs supported at wavenumbers $\abs{k} \geq k_{\min}$, where $\delta = 2\pi k_{\min}h$. Here, the $P_{succ}$ scaling is smaller by a constant factor of $4$. The exact closed form thus reproduces the reported exact constants, while the safe-band theorem trades this constant factor for a guarantee uniform over the input class as explained in Section \ref{subsec:class-guarantee}.

\subsubsection{Biharmonic $m=4$}

Now, we examine a different instance of the same even-moment order family. In this example, we show that the biharmonic operator with moment order $m=4$ demonstrates that the family characteristics are fully encompassed by the moment order. The normalized 
one-dimensional case for the stencil has the coefficients
\begin{equation}
    c_{\pm 2} = \tfrac{1}{16},\quad
    c_{\pm 1} = -\tfrac{4}{16},\quad
    c_0 = \tfrac{6}{16},
    \label{eq:biharm-stencil}
\end{equation}
defined on the set $T = \{-2,-1,0,1,2\}$. Again, as with the Laplacian, the odd-order moments vanish due to symmetry, and by moment calculation, $\mu_0 = \mu_2 = 0$ and $\mu_4 = \tfrac32 \neq 0$, so the moment order is $m = 4$. The stencil, then, approximates the fourth order derivative $D_4 = \tfrac{\mu_4}{4!}\partial_x^4 = \tfrac{1}{16}\partial_x^4$ and therefore has the moment polynomial $P_4(\xi) = \tfrac{\mu_4}{4!}\xi^4 = \tfrac{1}{16}\xi^4$ which is elliptic and isotropic with $\kappa_4 = \mu_4/4! = \tfrac{1}{16}$. The symbol is 
\begin{equation}
    \widehat H(\theta)
    = \tfrac{1}{16}\big(2\cos 2\theta - 8\cos\theta + 6\big)
    = \tfrac14(\cos\theta - 1)^2,
    \label{eq:biharm-symbol}
\end{equation}
which is real and has a 
$\;4^{th}$ order zero at the origin. Again, observing the symbol, it is evident that the supremum is at $\theta=\pi$, giving $\norm{\widehat H}_\infty = 1$, while the block encoding normalization constant $\lambda = \sum_t\abs{c_t} = (1 + 4 + 6 + 4 + 1)/16 = 1$. This means the gap is closed, and therefore the stencil block encoding design is sub-normalization-optimal:
\begin{equation}
    \norm{A_h}_{\mathrm{op}} = \norm{\widehat H}_\infty = \lambda = 1.
    \label{eq:biharm-optimal}
\end{equation}
Finally, the success probability has the same asymptotic behavior once we plug in the appropriate value for $m$. Since $\widehat H(\theta) = \theta^4/16 + O(\theta^6)$ near the origin,
\begin{equation}
    p_{\mathrm{succ}}(\ket{\chi_{k,h}})
    \;\xrightarrow[h\to0]{}\;
    \frac{(2\pi k)^8}{256}\,h^8
    = \Theta(h^{2m}), \qquad m = 4.
    \label{eq:biharm-psucc}
\end{equation}
The biharmonic is also the separable axis-sum extension $\sum_i \partial_{x_i}^4$, as with the Laplacian, and therefore no additional techniques are required to handle the $d$-dimensional cases. These examples illustrate that any symmetric even-order stencils form a family of block encodings whose parameters are determined solely by the moment order $m$, and that they are the same constructions read at potentially different values of $m$. 

\subsection{Asymmetric Advection-Diffusion Stencils}
\label{subsec:app-advection}
The next family showcases the flexibility and capability of the moment framework. Asymmetric advection-diffusion stencils drop the reflection symmetry. Any odd-order derivative must be approximated with an odd moment order, and any stencil with an odd moment order must be asymmetric. This family provides a structural framework for a wide class of odd-derivative operators, including transport equations \cite{simpson2011corrected, rood1987numerical}. To our knowledge, this family does not have an explicit spatial block encoding in the quantum literature, since the prior symmetric Laplacian method does not naturally extend to it.

The moment framework provides lower bounds on the probability of successful implementation; however, there are three complications arising from the block encoding: (i) complex symbol, (ii) loss of ellipticity for the polynomial, (iii) multiple zeros exactly as those discussed in Section \ref{subsec:lowerbound}.

Specifically, we consider the operator $a\,\partial_x + b\,\partial_x^2$, where $ a$ denotes the advection speed and $ b$ the diffusion coefficient. For notational simplicity, we use the dimensionless convention in which the mesh spacing is $h=1$; restoring the grid spacing replaces $a$ and $b$ in the stencil by $a/h$ and $b/h^2$, respectively. When discretizing this operator on $T = \{-1,0,1\}$, the coefficients from the central difference equations are
\begin{equation}
    c_{+1} = \tfrac{a}{2} + b,
    \qquad
    c_{0} = -2b,
    \qquad
    c_{-1} = -\tfrac{a}{2} + b .
    \label{eq:adv-stencil}
\end{equation}

\begin{remark}[Coefficient reflection for odd-order targets]
Under the ket-action convention $S_t\ket{x}=\ket{x+t}$, the operator built from the coefficients in Eq. (49) realizes $-a\,\partial_x + b\,\partial_x^2$ at leading order (Corollary \ref{cor:recovery} with $(-\partial)^\alpha$). To realize the target $a\,\partial_x + b\,\partial_x^2$ we take the reflected coefficients $\tilde c_t := c_{-t}$, i.e.\ $\tilde c_{+1} = b - \tfrac a2$, $\tilde c_{-1} = b + \tfrac a2$, $\tilde c_0 = -2b$. Reflection flips the sign of every odd moment ($\mu_1 \mapsto -\mu_1$) and leaves $\lvert\mu_\alpha\rvert$, $\lambda$, $\lvert\widehat H\rvert$, the zero set, and every bound of this subsection unchanged. We present all results for the representative coefficients in Eq. (49) with $\mu_1 = a > 0$.
\end{remark}

Its moments are $\mu_0 = 0$, $\mu_1 = \sum_t c_t\,t = a \neq 0$, and $\mu_2 = 2b$ where $b\geq0$. When $a\neq0$, $m=1$ and therefore the stencil approximates the first-derivative part of the operator $D_1 = \tfrac{\mu_1}{1!}\partial_x = a\,\partial_x$ (Corollary \ref{cor:recovery}). The consequences of the odd-order moment follow directly in calculating the symbol.
\begin{align}
    \widehat H(\theta) =& (a/2+b)e^{-i\theta} - 2b + (-a/2+b)e^{i\theta}\\
    =&\underbrace{2b\;(\cos\theta - 1)}_{\text{diffusion, real, even}}
    \;-\; \ii\,\underbrace{a\;\sin\theta.}_{\text{advection, imag, odd}}
    \label{eq:adv-symbol}
\end{align}
Thus, the symbol is complex, with the asymmetric advection term producing an imaginary odd component. Near the origin, 
\begin{equation} 
\widehat H(\theta) = -\ii a\theta - b\theta^2 + O(\theta^3), \end{equation} 
so the leading imaginary term $-\ii a\theta$ represents the first-derivative approximation, while the second moment determines the quadratic diffusive term. A separate complication appears when the one-dimensional derivative is embedded as a directional operator in multiple dimensions.

The next consequence of asymmetry is that the leading polynomial is directional. At the origin, the leading polynomial $P(\xi)=\mu_1 \xi= a\xi$ is one-dimensional. For this one-dimensional case, the unit sphere is simply $\{-1, +1\}$ and the ellipticity is preserved since $\abs{P(\pm1)}=a\neq0$. However, for multi-dimensional cases, the symbol does not retain its ellipticity. For example, consider a two-dimensional case where we discretize the operator $a\partial_{x_1} + 0\cdot\partial_{x_2}$. On the unit circle, $\xi = (\cos\phi, \; \sin\phi)$, the leading coefficient is $a$, and the polynomial $P = a\;\cos\phi$ vanishes at $\phi = \pi/2$ throughout the plane, making it vanish on the perpendicular $x_2$ axis as well. Thus, $\kappa_1 = \min_{\norm{\xi}=1}\abs{P_1(\xi)} = 0$ and the operator is not elliptic in multiple dimensions.

Finally, the operator symbol may carry additional zeros, which depend on the mixture of diffusion and advection. Starting from the symbol in Eq.~\eqref{eq:adv-symbol}, we observe that the zeros of the operators are precisely those that make both the real and imaginary parts vanish simultaneously. The imaginary part $-a\sin\theta$ vanishes at $\theta \in \{0, \pi\}$ whereas the real part $2b(\cos\theta - 1)$ vanishes only at $\theta = 0$. This means that,
\begin{equation}
    Z =
    \begin{cases}
        \{0,\ \pi\}, & b = 0 \quad\text{(pure advection)},\\[1ex]
        \{0\}, & b > 0 \quad\text{(advection--diffusion)},
    \end{cases}
    \label{eq:adv-zeroset}
\end{equation}
implying the zero set is determined by whether diffusion is present. For a pure advection operator, there is a zero at the Brillouin edge $\theta=\pi$, which is the highest frequency visible to the grid, and naturally the central-difference advection annihilates it. Adding the diffusion lifts this zero up, regularizing the family from two zeros to a single zero, similar to the symmetric family. Therefore, the existence of diffusion determines how we analyze the block encoding for this family. The diffusion coefficient switches between the multi-zero and single-zero cases, and each case calls for the relaxations discussed in Section \ref{subsec:lowerbound}.

Since the symbol for this family admits a $\sin\theta$ and $1-\cos\theta$ representation, we can obtain global lower bounds without relying on compactness. We still show that these bounds are derived from the moment framework. We start with a pure advection symbol and then move on to advection-diffusion.

\begin{lemma}[Pure Advection] 
\label{lem:adv-pure}
For pure advection, $b = 0$ and $a > 0$, so that $\widehat H(\theta) = -\ii\,\mu_1\sin\theta$ with $Z = \{0,\pi\}$ from Eq.~\eqref{eq:adv-zeroset}. Then for every $\theta \in [-\pi,+\pi]$, $\widehat H$ is bounded below as
\begin{equation}
    \abs{\widehat H(\theta)}
    \;\geq\;
    \frac{2\mu_1}{\pi}\,\dist(\theta, Z).
    \label{eq:adv-pure-bound}
\end{equation}
\end{lemma}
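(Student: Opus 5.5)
The bound reduces to Jordan's inequality $\sin x \geq \tfrac{2}{\pi}x$ for $x\in[0,\pi/2]$, applied after folding $\theta$ onto the quarter-period nearest a zero. First, note that for $b=0$ the symbol in Eq.~\eqref{eq:adv-symbol} reduces to $\widehat H(\theta)=-\ii a\sin\theta$ with $a=\mu_1>0$. Hence $\abs{\widehat H(\theta)}=\mu_1\abs{\sin\theta}$, and it suffices to prove
\[
\abs{\sin\theta}\;\geq\;\frac{2}{\pi}\,\dist(\theta,Z),\qquad Z=\{0,\pi\},
\]
for every $\theta\in[-\pi,\pi]$, where $\dist$ is the torus distance from Section~\ref{sec:spectral_analysis} (so $\pm\pi$ both represent the zero at $\pi$).

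Second, I will use symmetry to cut down the range. Both sides are even in $\theta$: $\abs{\sin}$ is even, and $Z$ is invariant under $\theta\mapsto-\theta$ on the torus. Both sides are also invariant under $\theta\mapsto\pi-\theta$: this map fixes $\abs{\sin\theta}$ and swaps the two zeros $0$ and $\pi$. So I may assume $\theta\in[0,\pi/2]$. On this interval the nearest point of $Z$ is $0$, with $\dist(\theta,Z)=\theta$. This is because $\abs{\theta-0}=\theta\leq\pi-\theta$, and any other representative $\theta+2\pi k$ only increases the distance.

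Third, the remaining claim is Jordan's inequality $\sin\theta\geq\tfrac{2}{\pi}\theta$ on $[0,\pi/2]$. This follows from concavity of $\sin$ on $[0,\pi]$, since the chord joining $(0,0)$ and $(\pi/2,1)$ lies below the graph. Multiplying by $\mu_1$ gives Eq.~\eqref{eq:adv-pure-bound}.

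The only delicate point is the bookkeeping of the torus distance near $\theta=\pm\pi$. There one must check that $\dist(\theta,Z)=\pi-\abs{\theta}$ and not $\abs{\theta}$, which is exactly what the reflection $\theta\mapsto\pi-\theta$ handles. I would state that step explicitly rather than leave it to the symmetry remark. I would also remark that the constant $2/\pi$ is sharp, with equality at $\theta=\pm\pi/2$. This is consistent with the moment picture: near each zero the bound is linear, matching the moment order $m=1$, and the local slope there is $\mu_1$ at $\theta=0$ and $\pi$ alike.
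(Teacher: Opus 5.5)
Your proposal is correct and follows essentially the paper's argument. The paper sets $d=\dist(\theta,Z)=\min(|\theta|,\pi-|\theta|)\in[0,\pi/2]$ and uses $|\sin\theta|=\sin d$ by reflecting about the nearest zero, which is your folding via $\theta\mapsto-\theta$ and $\theta\mapsto\pi-\theta$; it then applies Jordan's inequality and notes sharpness at $\theta=\pm\pi/2$, with your explicit torus-distance check near $\pm\pi$ and concavity justification being slightly more detailed versions of those same steps.
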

\begin{proof}
Remember that from the moment data $\mu_1 = a$ and $\mu_2 = 2b$. On the interval $\theta \in [-\pi,\pi]$, the distance of any theta to the zero set $Z = \{0,\pi\}$ is at most $\pi/2$:
\[
d := \dist(\theta, Z) = \min(\abs{\theta},\,\pi - \abs{\theta}) \in [0, \pi/2].
\]
Since the absolute value of $\sin\theta$ depends on the distance between $\theta$ and $\pi$, reflecting $\theta$ about its nearest zero yields $\abs{\sin\;\theta}=\sin\;d$. We apply Jordan's inequality $\sin d \geq \tfrac{2}{\pi}d$ on $[0,\pi/2]$, which then yields $\abs{\widehat H(\theta)} = \mu_1\abs{\sin\theta} = \mu_1\sin d \geq\tfrac{2\mu_1}{\pi}\,d$. Furthermore, note that the Jordan inequality yields equality at $d = \pi/2$, corresponding to $\theta = \pm\pi/2$, and therefore the equality is sharp.
\end{proof}
This lemma shows that the pure advection symbol is globally bounded below, derived from the $m = 1$ statement of the framework. The linear growth $\dist(\theta, Z)^{1}$ is the $\delta^{m}$ law at $m = 1$, where, unlike the elliptic cases where $\kappa_m$ is used, the moment $\mu_1$ directly shows up as the rate. Now we extend this idea to the advection-diffusion case.

\begin{lemma}[Advection-Diffusion] 
\label{lem:adv-diff}
Let both $a, b>0$ and from Eq.~\eqref{eq:adv-zeroset}, $Z = \{0\}$. Then for every $\theta \in [-\pi,\pi]$,
\begin{equation}
    \abs{\widehat H(\theta)}
    \;\geq\;
    \frac{2}{\pi}\,\min(\mu_1,\,\mu_2)\,\abs{\theta}.
    \label{eq:adv-diff-bound}
\end{equation}
When $\mu_1 > \mu_2$ the bound is attained at $\theta = \pi$, where
$\abs{\widehat H(\pi)} = 2\mu_2$.    
\end{lemma}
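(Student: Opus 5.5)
The plan is to avoid compactness and get an exact factorization of $\abs{\widehat H(\theta)}^2$ from the closed form of the symbol in Eq.~\eqref{eq:adv-symbol}, then finish with Jordan's inequality as in Lemma~\ref{lem:adv-pure}. Throughout I use the moment identifications $\mu_1 = a$ and $\mu_2 = 2b$, so that
\[
    \widehat H(\theta) = \mu_2(\cos\theta - 1) - \ii\,\mu_1\sin\theta .
\]

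\textbf{Step 1 (factorize).} Since the real and imaginary parts are separated, I will write
\[
    \abs{\widehat H(\theta)}^2 = \mu_2^2(1-\cos\theta)^2 + \mu_1^2\sin^2\theta .
\]
Substituting the half-angle identities $1-\cos\theta = 2\sin^2(\theta/2)$ and $\sin\theta = 2\sin(\theta/2)\cos(\theta/2)$ should give
\[
    \abs{\widehat H(\theta)}^2 = 4\sin^2(\theta/2)\Big(\mu_2^2\sin^2(\theta/2) + \mu_1^2\cos^2(\theta/2)\Big).
\]
This displays the common zero at the origin together with a bracket that is a convex combination of $\mu_2^2$ and $\mu_1^2$.

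\textbf{Step 2 (bound the bracket).} The bracket has weights $\sin^2(\theta/2)$ and $\cos^2(\theta/2)$ summing to $1$, and $\mu_1,\mu_2>0$ because $a,b>0$. Hence the bracket is at least $\min(\mu_1,\mu_2)^2$, and
\[
    \abs{\widehat H(\theta)} \geq 2\min(\mu_1,\mu_2)\,\abs{\sin(\theta/2)} .
\]
This step is where diffusion lifts the Brillouin-edge zero seen in Lemma~\ref{lem:adv-pure}. At $\theta=\pi$ the advection weight $\cos^2(\theta/2)$ vanishes, but the diffusion weight $\sin^2(\theta/2)$ equals $1$. It also shows why the minimum of the two moments must appear in the bound.

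\textbf{Step 3 (linearize).} For $\theta\in[-\pi,\pi]$ we have $\abs{\theta}/2\in[0,\pi/2]$, so Jordan's inequality gives $\abs{\sin(\theta/2)} \geq \abs{\theta}/\pi$. Combining this with Step 2 yields Eq.~\eqref{eq:adv-diff-bound}.

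\textbf{Step 4 (sharpness).} I will evaluate $\widehat H(\pi) = -2\mu_2$, so $\abs{\widehat H(\pi)} = 2\mu_2$. When $\mu_1>\mu_2$ the right-hand side of Eq.~\eqref{eq:adv-diff-bound} at $\theta=\pi$ equals $\tfrac{2}{\pi}\mu_2\pi = 2\mu_2$, so the bound is attained there. This is consistent with Steps 2 and 3, since at $\theta=\pi$ the bracket equals $\mu_2^2$ and Jordan's inequality is an equality.

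The only nontrivial step is finding the half-angle factorization in Step 1. Once it is in hand, the remaining steps are elementary inequalities.
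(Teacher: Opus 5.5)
Your proposal is correct and follows essentially the same route as the paper. Your half-angle factorization $4\sin^2(\theta/2)\bigl(\mu_2^2\sin^2(\theta/2)+\mu_1^2\cos^2(\theta/2)\bigr)$ is the paper's $u\bigl[(\mu_2^2-\mu_1^2)u+2\mu_1^2\bigr]$ rewritten with $u=1-\cos\theta=2\sin^2(\theta/2)$, and your convex-combination bound on the bracket is the paper's endpoint argument for a function linear in $u$ on $[0,2]$, followed by the same Jordan step and the same evaluation $\widehat H(\pi)=-2\mu_2$ for sharpness.
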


\begin{proof}
Starting again from the symbol, Eq.~\eqref{eq:adv-symbol}, and squaring it,
\[
\abs{\widehat H(\theta)}^2 = \mu_2^2(1-\cos\theta)^2 + \mu_1^2\sin^2\theta.
\]
Substituting $u := 1 - \cos\theta \in [0,2]$ and $\sin^2\theta = (1-\cos\theta)(1+\cos\theta) = u(2-u)$ gives
\[
    \abs{\widehat H(\theta)}^2
    = \mu_2^2 u^2 + \mu_1^2 u(2-u)
    = u\big[(\mu_2^2 - \mu_1^2)\,u + 2\mu_1^2\big].
\]
Here, the polynomial inside the square brackets is linear in the interval $[0,2]$. This means that its minimum must be on one of the endpoints. Calculating, $2\mu_1^2$ at $u = 0$ and $2\mu_2^2$ at $u = 2$. Thus $\abs{\widehat H(\theta)}^2 \geq 2u\,\min(\mu_1^2, \mu_2^2)$. We use the trigonometric relation $1-\cos \theta=2\sin^{2}\theta/2$ and apply Jordan's inequality to $\theta/2$ again to arrive at $\abs{\widehat H(\theta)}^2 \geq \tfrac{4}{\pi^2}\min(\mu_1^2,\mu_2^2)\, \theta^2$. Taking the square root of both sides yields the final result. We further note that if $\mu_1 > \mu_2$ then $\tfrac{2}{\pi}\mu_2\cdot\pi = 2\mu_2$ as the symbol, $\widehat H(\pi) = -2\mu_2$, is real at $\theta=\pi$.
\end{proof}
The two lemmas provide explicit family-wise constants of the type anticipated in Section \ref{subsec:lowerbound}. In particular, the pure-advection lower bound is controlled by the first moment, whereas the advection-diffusion lower bound depends jointly on the first two moments through $\min(\mu_1,\mu_2)$. Now that we have the explicit lower bounds for the family, we can apply the safe-band theorem as per Section \ref{sec:safe-band-psuc}. For pure advection, $\lambda = \abs{\mu_1/2} + \abs{-\mu_1/2} = \mu_1$, and Lemma~\ref{lem:adv-pure} gives, for every $\delta \in (0, \pi/2]$,
\begin{equation}
    p_{\mathrm{succ}}(\ket{v})
    \;\geq\;
    \frac{4\mu_1^2}{\pi^2\lambda^2}\,\delta^2\,\eta_\delta(\ket{v})
    \;=\;
    \frac{4}{\pi^2}\,\delta^{2}\,\eta_\delta(\ket{v}),
    \label{eq:adv-pure-succ}
\end{equation}
with an explicit constant and the $\delta^{2m}$ law at $m = 1$. The spectral mass $\eta_\delta$ counts spectral mass away from both the zeros in the $Z$ set. Similarly, for advection-diffusion, the subnormalization originates from Eq.~\eqref{eq:adv-stencil},
\begin{equation}
    \lambda
    = \abs{\tfrac{\mu_1}{2} + \tfrac{\mu_2}{2}} + \mu_2 +
      \abs{\tfrac{\mu_2}{2} - \tfrac{\mu_1}{2}}
    = \mu_2 + \max(\mu_1, \mu_2),
    \label{eq:adv-lambda}
\end{equation}
and Lemma~\ref{lem:adv-diff} gives
\begin{equation}
    p_{\mathrm{succ}}(\ket{v})
    \;\geq\;
    \frac{4}{\pi^2}\,
    \frac{\min(\mu_1,\mu_2)^2}{\big(\mu_2 + \max(\mu_1,\mu_2)\big)^2}\,
    \delta^{2}\,\eta_\delta(\ket{v}).
    \label{eq:adv-diff-succ}
\end{equation}
Every quantity in Eq. \eqref{eq:adv-pure-succ} and Eq. \eqref{eq:adv-diff-succ} is a function of the first two moments of the stencil, with explicit constants worked out, showcasing the reach and depth of this framework.

\section{Discussion}\label{sec:discussion}
In this work, we have shown that a single integer, $m$, representing the moment order of a finite-difference stencil, simultaneously determines the (i) continuum operator recovered, (ii) the symbol's vanishing structure via symbol expansion, and (iii) the block encoding cost through both the optimality verdict using the operator-norm chain and the success probability using $\delta^{2m}$. This framework sits on a unique middle ground, class-level constructions that certify no optimality, and optimal constructions that do not transfer across operators. Our framework pivots on a single structural parameter that yields a uniform block-encoding method across multiple operator families, while certifying when the operator-norm gap closes, thereby yielding an optimal subnormalization condition for the block-encoding design that subsumes the conditions derived individually for operator-specific block-encoding constructions. The optimal Laplacian construction put forward by Sturm and Schillo in \cite{sturm2025efficient} is recovered as the gap-closed instance of our theorem in Section \ref{subsec:app-symmetric}, while the same framework is used to find constructions of higher-order stencils (biharmonic), and the explicit constants have been calculated for the advection-diffusion family treated at $m=1$, for which no spatial block encoding exists.

There are several avenues for future research. In this work, we have only considered periodic boundary conditions, translation invariance, and constant coefficients. The optimality condition for subnormalization applies only to exact encodings, and the ancilla qubit counts and circuit depth are inherited directly from the prepare-select construction of LCUs, without optimization. Again, the class guarantee of Theorem \ref{thm:safe-band} is conditional on the input spectral mass inside the safe band. Although for this particular observation, the limiting factor is not the block encoding design but rather the fact that inputs with spectral mass concentrated near the zero set of the symbol are attenuated by the operator itself. Thus, the theorem states the honest price of an input-uniform statement. For non-elliptic or multi-zero symbols, relaxations of the results in Section~\ref{sec:safe-band-psuc} still guarantee the existence of a success-probability floor but do not state the constants explicitly.

Future work includes extending this framework to accommodate Neumann and Dirichlet boundary conditions, as well as variable-coefficient operators. This may be possible by connecting this work with the approaches developed in \cite{kharazi2025explicit, zecchi2026block, li2023efficient}. Another promising direction would be to carry this analysis further downstream, connect it to QSVT primitives, and analyze how the framework affects the localization of approximation errors throughout the pipeline. The safe-band factorization may also be useful beyond analysis, as it could serve as an objective for designing and optimizing stencils that retain a prescribed order of approximation while enabling more efficient block encodings.

%\section*{Acknowledgments}

\section*{Competing interests}
All authors declare no financial or non-financial competing interests. 

%\section*{Code and Data Availability}

\bibliography{manuscript}
\bibliographystyle{unsrt}
\end{document}